 \newtheorem{theorem}{Theorem}
 \newtheorem{conj}[theorem]{Conjecture}
 \newtheorem{lemma}[theorem]{Lemma}
 \newtheorem{corollary}[theorem]{Corollary}
\def\GrabProofArgument[#1]{ #1: \egroup\ignorespaces}
\def\proof{\noindent\textbf\bgroup Proof%
    \@ifnextchar[{\GrabProofArgument}{. \egroup\ignorespaces}}
\definecolor{mygreen}{RGB}{20,140,80}
\definecolor{mydarkgray}{gray}{0.15} 
\newcommand*\samethanks[1][\value{footnote}]{\footnotemark[#1]}
\newcommand{\citeboth}[1]{\hypersetup{citecolor=mydarkgray}\citeauthor{#1}\hypersetup{citecolor=mygreen} \cite{#1}}
\definecolor{burntorange}{rgb}{0.8, 0.33, 0.0}
\definecolor{figurePink}{rgb}{1.0, 0.20, 0.74}
\definecolor{figureBlue}{rgb}{0.03, 0.27, 0.71}
\newcommand{\Alg}{\mathcal{A}}
\renewcommand{\log}{\lg}
\newcommand{\dist}{\textrm{dist}}
\newcommand{\Fam}{\mathcal{F}}
\newcommand{\E}{\mathbb{E}}
\newcommand{\R}{\mathbb{R}}
\newcommand{\In}{\textrm{In}}
\newcommand{\eps}{\varepsilon}
\title{Lower Bounds for External Memory Integer Sorting \\ via Network Coding}
\author{Alireza Farhadi \thanks{\texttt{\{farhadi, hajiagha\}@cs.umd.edu}. Supported in part by NSF CAREER award CCF-1053605, NSF AF:Medium grant CCF-1161365, NSF BIGDATA grant IIS-1546108, NSF SPX grant CCF-1822738,
 UMD AI in Business and Society Seed Grant and UMD Year of Data Science Program Grant.}\\University of Maryland\\ College Park, MD
\and MohammadTaghi Hajiaghayi \samethanks[1] \\ University of Maryland\\
College Park, MD\and Kasper Green Larsen
\thanks{\texttt{larsen@cs.au.dk}. Supported by a Villum
  Young Investigator Grant and an AUFF Starting Grant.}\\ Aarhus
University\\ Denmark \and Elaine
Shi\thanks{\texttt{runting@gmail.com}. Supported in part by NSF under grant number CCF1822805.}\\Cornell University\\
Ithaca, NY}
\date{}
\begin{document}

\maketitle

\begin{abstract}
Sorting extremely large datasets is a frequently
occuring task in practice. These datasets are usually much
larger than the computer's main memory; thus external memory sorting
algorithms, first introduced by \citeboth{aggarwal1988input} (1988),
 are often used. The complexity of comparison based external memory
 sorting has been understood for decades by now, however the situation
 remains elusive if we assume the keys to be sorted are integers. In
 internal memory, one can sort a set of $n$ integer keys of
 $\Theta(\lg n)$ bits each in
 $O(n)$ time using the classic Radix Sort algorithm, however in
 external memory, there are no faster integer sorting algorithms known than the simple
 comparison based ones. Whether such algorithms exist has remained a
 central open problem in external memory algorithms for more than
 three decades.

In this paper, we present
a {\em tight} conditional lower bound on the complexity of external
memory sorting of integers. Our lower
bound is based on a famous conjecture in network coding by \citeboth{lili}, who conjectured that network coding cannot help anything
beyond the standard multicommodity flow rate in undirected graphs.

The only previous work connecting the Li and Li conjecture to lower
bounds for algorithms is due to \citeboth{Adler:soda}.
Adler et al. indeed obtain relatively simple lower bounds for
\emph{oblivious} algorithms (the memory access pattern is fixed
and independent of the input data). Unfortunately obliviousness is a
strong limitations, especially for integer sorting: we show that the
Li and Li conjecture implies an $\Omega(n \log n)$ lower bound for internal memory \emph{oblivious}
sorting when the keys are $\Theta(\lg n)$ bits. This is
in sharp contrast to the classic (non-oblivious) Radix Sort
algorithm. Indeed going beyond obliviousness is
highly non-trivial;  we need to introduce several new methods and
involved techniques, which are of their own interest, to obtain our
tight
lower bound for external memory integer sorting.
\end{abstract}

\thispagestyle{empty}
\newpage

\section{Introduction}
Sorting is one of the most basic algorithmic primitives and has attracted lots of attention from the beginning of the computing era.
Many classical algorithms have been designed for this problem such as Merge Sort, Bubble Sort, Insertion Sort, etc. As sorting extremely large data has become essential for many applications, there has been a strong focus on designing more efficient algorithms for sorting big datasets \cite{aggarwal1988input}. These datasets are often much larger than the computer's main memory and the performance bottleneck changes from the being the number of CPU instructions executed, to being the number of accesses to slow secondary storage. In this external memory setting, one usually uses the external memory model to analyse the performance of algorithms. External memory algorithms are designed to minimize the number of \textit{input/output (I/O)}s between the internal memory and external memory (e.g. hard drives, cloud storage, etc.), and we measure the complexity of an algorithm in terms of the number of I/Os it performs.

Formally, the external memory model consists of a main memory that can hold $M$ words of $w$ bits each (the memory has a total of $m = Mw$ bits), and an infinite (random access) disk partitioned into blocks of $B$ consecutive words of $w$ bits each (a block has a total of $b = Bw$ bits). The input to an external memory algorithm is initially stored on disk and is assumed to be much larger than $M$. An algorithm can then read blocks into memory, or write blocks to disk. We refer jointly to these two operations as an I/O. The complexity of an algorithm is measured solely in terms of the number of I/Os it makes.

\citeboth{aggarwal1988input} considered the sorting problem in the external memory model. A simple modification to the classic Merge Sort algorithm yields a comparison based sorting algorithm that makes $O((n/B)\lg_{M/B}(n/B))$ I/Os for sorting an array of $n$ comparable records (each storable in a word of $w$ bits). Notice that $O(n/B)$ would correspond to \emph{linear} I/Os, as this is the amount of I/Os needed to read/write the input/output. \citeboth{aggarwal1988input} complemented their upper bound with a matching lower bound, showing that comparison based external memory sorting algorithms must make $\Omega((n/B)\lg_{M/B}(n/B))$ I/Os. In the same paper, Aggarwal and Vitter also showed that any algorithm treating the keys as \emph{indivisible} atoms, meaning that keys are copied to and from disk blocks, but never reconstructed via bit tricks and the like, must make $\Omega(\min\{n,(n/B)\lg_{M/B}(n/B) \})$ I/Os. This lower bound does not assume a comparison based algorithm, but instead makes an indivisibility assumption. Notice that the lower bound matches the comparison based lower bound for large enough $B$ ($B > \lg n$ suffices). The comparison and indivisibility settings have thus been (almost) fully understood for more than three decades.

However, if the input to the sorting problem is assumed to be $w$ bit integers and we allow arbitrary manipulations of the integers (hashing, XOR tricks etc.), then the situation is completely different. In the standard internal memory computational model, known as the word-RAM, one can design integer sorting algorithms that far outperform comparison based algorithms regardless of $w$. More concretely, if the word size and key size is $w= \Theta(\lg n)$, then Radix Sort solves the problem in $O(n)$ time, and for arbitrary $w$, one can design sorting algorithms with a running time of $O(n \sqrt{ \lg \lg n})$ in the randomized case~\cite{han2002integer} and $O(n \lg \lg n)$ in the deterministic case~\cite{MR2121186} (both bounds assume that the word size and key size are within constant factors of each other). In external memory, no integer sorting algorithms faster than the comparison based $O((n/B)\lg_{M/B}(n/B))$ bound are known! Whether faster integer sorting algorithms exist was posed as an important open problem in the original paper by \citeboth{aggarwal1988input} that introduced the external memory model. Three decades later, we still do not know the answer to this question.

In this paper, we present tight conditional lower bounds for external memory integer sorting via a central conjecture by \citeboth{lili} in the area of \emph{network coding}. Our conditional lower bounds show that it is impossible to design integer sorting algorithms that outperform the optimal comparison based algorithms, thus settling the complexity of integer sorting under the conjecture by Li and Li.

\subsection{Network Coding}
The field of network coding studies the following communication problem over a network: Given a graph $G$ with capacity constraints on the edges and $k$ data streams, each with a designated source-sink pair of nodes $(s_i,t_i)$ in $G$, what is the maximum rate at which data can be transmitted concurrently between the source-sink pairs? A simple solution is to forward the data as indivisible packages, effectively reducing the problem to Multicommodity Flow (MCF). The key question in network coding, is whether one can achieve a higher rate by using coding/bit-tricks. This question is known to have a positive answer in directed graphs, where the rate increase may be as high as a factor $\Omega(|G|)$ (by sending XOR's of carefully chosen input bits), see e.g.~\cite{Adler:soda}. However the question remains wide open for undirected graphs where there are no known examples for which network coding can do anything better than the Multicommodity Flow rate. The lack of such examples resulted in the following central conjecture in network coding:
\begin{conj}[Undirected $k$-pairs Conjecture~\cite{lili}]
\label{conj:main}
The coding rate is equal to the Multicommodity Flow rate in undirected graphs.
\end{conj}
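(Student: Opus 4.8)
\medskip
\noindent\emph{Proof proposal.} Conjecture~\ref{conj:main} is a long-standing open problem, so what follows is not a proof but the natural line of attack together with the barrier that has so far blocked it. Only one inclusion needs work: the coding rate is trivially \emph{at least} the Multicommodity Flow (MCF) rate, since routing flow as indivisible packets is already a (linear, coding-free) network code; the task is to show that coding cannot beat the MCF rate. The plan is to set up the standard entropic relaxation. Fix an optimal network code of rate $r$ over block length $N$; let the $k$ source messages be independent and uniform, $X_1,\dots,X_k$, and for each (directed use of an) edge $e$ let $Y_e$ be the random symbol sent on $e$. The capacity constraint becomes $H(Y_e)\le N\,c(e)$, decoding becomes $H(X_i \mid \{Y_e : e \text{ enters } t_i\})=0$, and the code's consistency says that at every internal vertex $v$ the outgoing symbols are a deterministic function of the incoming symbols together with the messages originating at $v$.

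Next I would try to read a fractional multicommodity flow of value $\approx r$ directly off these entropies: for commodity $i$ and edge $e$, set a candidate flow value $f_i(e)$ equal to a conditional mutual information such as $I(X_i ; Y_e \mid X_{-i})$ (or a symmetrized variant handling the two orientations of an undirected edge), then verify the capacity bound $\sum_i f_i(e) \le N\,c(e)$ by subadditivity of entropy and verify conservation of $f_i$ at each internal vertex using the functional-dependence relations above together with submodularity of entropy. When $k=1$ this recovers the information-theoretic max-flow--min-cut theorem, and the single-source multicast case is the Ahlswede--Cai--Li--Yeung theorem; the real content is making the conservation step survive when several commodities share an undirected edge whose single symbol $Y_e$ may simultaneously encode information about many $X_i$ travelling in opposing directions.

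The hard part --- and precisely why this is a conjecture and not a theorem --- is that conservation step in the genuine multicommodity regime: the naive mutual-information ``flow'' need not be conserved, and the outer bounds one can actually prove only show that $r$ is at most a sparse-cut-type quantity, which can exceed the MCF rate by a $\Theta(\log k)$ factor (the Leighton--Rao / Linial--London--Rabinovich flow--cut gap). Closing this gap would seem to require either new non-Shannon-type entropy inequalities tailored to undirected instances, or a combinatorial argument that rounds an arbitrary code to an honest flow without losing the logarithmic factor; neither is known, and even exhibiting a single undirected instance on which coding provably beats flow would already be a breakthrough. Accordingly, in this paper we do not prove Conjecture~\ref{conj:main}; we take it as a hypothesis and derive from it tight I/O lower bounds for external-memory integer sorting.
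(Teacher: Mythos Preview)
Your proposal is correct in the only sense that matters here: Conjecture~\ref{conj:main} is genuinely open, the paper makes no attempt to prove it, and you rightly decline to do so either, instead adopting it as a hypothesis. Your sketch of the entropic approach and the flow--cut gap barrier is accurate background, but note that the paper itself offers no such discussion of proof strategies for the conjecture; it simply states the conjecture, cites the supporting evidence (special graph classes, the Braverman et al.\ gap amplification, empirical and exhaustive-search results), and proceeds to use it conditionally.
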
 
Despite the centrality of this conjecture, it has so forth resisted all attempts at either proving or refuting it. 
Adler et al.~\cite{Adler:soda} made an exciting connection between the conjecture and lower bounds for algorithms. More concretely, they proved that if Conjecture~\ref{conj:main} is true, then one immediately obtains non-trivial lower bounds for all of the following:
\begin{itemize}
\item \emph{Oblivious} external memory algorithms.
\item \emph{Oblivious} word-RAM algorithms.
\item \emph{Oblivious} two-tape Turing machines.
\end{itemize}
In the above, \emph{oblivious} means that the memory access pattern of the algorithm (or tape moves of the Turing machine) is fixed and independent of the input data. Thus proving Conjecture~\ref{conj:main} would also give the first non-trivial lower bounds for all these classes of algorithms. One can view this connection in two ways: Either as exciting conditional lower bounds for (restricted) algorithms, or as a strong signal that proving Conjecture~\ref{conj:main} will be very difficult.

In this paper, we revisit these complexity theoretic implications of Conjecture~\ref{conj:main}. Our results show that the restriction to oblivious algorithms is unnecessary. In more detail, we show that Conjecture~\ref{conj:main} implies non-trivial (and in fact tight) lower bounds for external memory sorting of integers and for external memory matrix transpose algorithms. We also obtain tight lower bounds for word-RAM sorting algorithms when the word size is much larger than the key size, as well as tight lower bounds for transposing a $b \times b$ matrix on a word-RAM with word size $b$ bits. The striking thing is that our lower bounds hold without \emph{any} extra assumptions such as \emph{obliviousness}, \emph{indivisibility}, \emph{comparison-based} or the like. Thus proving Conjecture~\ref{conj:main} is as hard as proving super-linear algorithm lower bounds in the full generality word-RAM model, a barrier far beyond current lower bound techniques! Moreover, we show that the assumption from previous papers about algorithms being oblivious makes a huge difference for integer sorting: We prove an $\Omega(n \log n)$ lower bound for sorting $\Theta(\log n)$ bit integers using an oblivious word-RAM algorithm with word size $\Theta(\lg n)$ bits. This is in sharp contrast to the classic (non-oblivious) Radix Sort algorithm, which solves the problem in $O(n)$ time. Thus the previous restriction to oblivious algorithms may be very severe for some problems. 

\subsection{Lower Bounds for Sorting}
Our main result for external memory integer sorting is the following connection to Conjecture~\ref{conj:main}:

\begin{restatable}{theorem}{thmmain}
\label{thm:main}
Assuming Conjecture~\ref{conj:main}, any randomized algorithm for the external memory sorting problem with $w = \Omega(\lg n)$ bit integers, having error probability at most $1/3$, must make an expected 
$$
\Omega\Big(\min\Big\{n,\frac{n}{B} \cdot \lg_{2M/B} \frac{n}{B} \Big\}\Big)
$$
I/Os.
\end{restatable}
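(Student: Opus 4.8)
The plan is to push the Adler et al.\ reduction from Conjecture~\ref{conj:main} past the obliviousness barrier. Fix a randomized external-memory sorting algorithm $\Alg$ with $w=\Omega(\lg n)$ and error $\le 1/3$ that makes $T$ expected I/Os, and feed it the hard distribution in which the $n$ keys are a uniformly random permutation of $n$ distinct $w$-bit values (so that the sorted output forces $\Theta(n\lg n)$ bits of ranking information to be produced); fixing the coins, we may analyse a deterministic $\Alg$ correct on a constant fraction of such inputs. The goal is then: (i) from the execution of $\Alg$ on such an input $\pi$, read off an undirected network $G_\pi$ and a demand graph $H_\pi$ for which $\Alg$ itself is a high-rate coding solution; (ii) invoke Conjecture~\ref{conj:main} to obtain an equally good multicommodity flow in $G_\pi$; (iii) contradict this with a cut/congestion bound unless $T=\Omega(\min\{n,(n/B)\lg_{2M/B}(n/B)\})$.

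\textbf{The network and the coding solution.} Let $G_\pi$ be the time-expanded graph of $\Alg$'s execution: a vertex for the memory contents after each step, the memory vertices chained by undirected edges of capacity $m=Mw$; a vertex for each \emph{version} of each disk block ever touched, the versions of a given location chained by capacity-$b$ edges ($b=Bw$); and, for each of the $T$ I/Os, a capacity-$b$ edge joining the accessed block version to the current memory vertex. Each input and each output position is a vertex. To inject free information, run $\Alg$ on keys whose high-order (rank) bits realise $\pi$ while a low-order, or grouped auxiliary, field carries a message $X_i$; since $\Alg$ sorts, $X_i$ is delivered from the ``input position $i$'' vertex to the ``output position $\pi(i)$'' vertex, so $\Alg$ is a valid coding solution for these $n$ source--sink pairs. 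Grouping the $n$ positions into blocks at each of $L=\Theta(\lg_{2M/B}(n/B))$ geometric scales $s_\ell=B(2M/B)^\ell$ yields, for every scale, a (near) complete-bipartite demand among the blocks --- equivalently, a reduction through external-memory matrix transpose at that scale --- all realised simultaneously by the single execution, so the coding rate of the combined multi-scale instance $H_\pi$ on $G_\pi$ is large.

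\textbf{The flow bound.} By Conjecture~\ref{conj:main}, $H_\pi$ is also (fractionally) routable in $G_\pi$, and I would contradict this with a combinatorial bound tailored to time-expanded external-memory graphs: any flow path realising a scale-$\ell$ demand must cross $\Omega(1)$ of the capacity-$b$ I/O edges, and --- the crucial point --- the demands at the $L$ scales are information-theoretically independent for a random $\pi$, so they cannot be superposed and must draw on essentially disjoint portions of the total $\Theta(Tb)$ I/O capacity (this is exactly where one needs the coding\,=\,flow equivalence rather than an indivisibility argument, since cross-scale bit tricks are otherwise conceivable). Summing, $Tb=\Omega(L\cdot nw)=\Omega\big((n/B)\lg_{2M/B}(n/B)\cdot b\big)$; the outer $\min\{n,\cdot\}$ and the base $2M/B$ (rather than $M/B$) absorb the regimes $B=O(1)$ and $M=O(B)$, where the statement degrades to the trivial scanning bounds $\Omega(n/B)$ and $\Omega(n)$.

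\textbf{The main obstacle.} The real difficulty --- and the reason prior work assumed obliviousness --- is that $G_\pi$ and $H_\pi$ are \emph{both} functions of the input and may be adversarially correlated: $\Alg$ knows $\pi$ and could in principle tailor its access pattern so that the induced execution graph happens to route the induced demand cheaply (e.g.\ by reading input $i$ just before writing output $\pi(i)$, collapsing the relevant distances). This is not circular only because producing output position $\pi(i)$ at all requires $\Alg$ to have learned the global quantity $\mathrm{rank}(i)$, and learning enough such ranks early is itself expensive. I would formalise this with a potential/entropy argument carried alongside the flow: track, as a function of the number of I/Os performed, the amount of ranking information the algorithm can have committed to across the natural scale-$\ell$ cuts, showing it can grow by only $O(b)$ per I/O regardless of the access pattern, so that the demands cannot become ``short'' until $\Omega(L\cdot n/B)$ I/Os have passed. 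An alternative route is to replace $G_\pi$ by a fixed, input-independent super-network into which every execution embeds via a bounded-blow-up routing gadget, restoring the setting of Adler et al. Making either route simultaneously keep the coding rate high and the flow rate low is the crux; the remaining pieces are adaptations of the Adler et al.\ template and of the classical external-memory counting argument recast for cuts and fractional flows.
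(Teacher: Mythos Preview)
Your proposal correctly identifies the central difficulty --- that without obliviousness the execution graph $G_\pi$ and the demand $H_\pi$ are both input-dependent and may be adversarially correlated --- but it does not actually resolve it. Both of your suggested fixes (a potential/entropy argument tracking ``ranking information committed across scale-$\ell$ cuts'', or embedding every execution into a fixed super-network with bounded blow-up) are left as programmes rather than arguments, and neither matches what the paper does. The paper's resolution is quite different and concrete: it first observes that the number of distinct I/O-graphs is at most $(t+n(w+\lg n)/b+nw/b+1)^{t+1}$, so by pigeonhole there is a \emph{fixed} graph $G$ realised by a set $\Gamma$ of $\ge n!2^{nw}/(2\cdot(\text{that count}))$ correct inputs; within $\Gamma$ one then finds a fixed permutation $\pi$ with $|\Gamma_\pi|=:|\Fam|\ge 2^{nw-o(nw)}$ and with $(4/5)n$ source--sink pairs at distance $\Omega(\lg_{2m/b}(nw/b))$ in $G$. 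This eliminates the input-dependence of both the graph and the demand in one stroke, at the cost that the coding solution only works for data strings in $\Fam$, not for all of $\{0,1\}^{nw}$.

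That cost is the second place where your proposal is missing the key idea. Network coding rate is defined for \emph{independent} sources, so one cannot simply sample from $\Fam$. The paper introduces a coordinator node $u$ and proves a separate lemma (Lemma~\ref{lem:coordinate}): given independent uniform $X_1,\dots,X_n\in\{0,1\}^w$, the coordinator can send prefix-free messages of total expected length $o(nw)$ enabling each player to XOR-shift its $X_i$ so that the concatenation lands in $\Fam$. The proof of this lemma is itself nontrivial (a coupling/Pinsker argument in the style of Barak et al.'s $\sqrt{IC}$ compression, showing a random string is within Hamming distance $O(\sqrt{nw\cdot r})$ of $\Fam$ when $|\Fam|\ge 2^{nw-r}$). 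The coordinator edges add only $nw/2+o(nw)$ to the total capacity incident to $u$, so at most that much flow can shortcut through $u$; the remaining $\Omega(nw)$ flow must traverse the long paths in $G$, yielding $tb=\Omega(nw\lg_{2m/b}(nw/b))$. Your multi-scale transpose decomposition and ``independent scales'' heuristic are not used; the paper gets the $\lg_{2M/B}(n/B)$ factor directly from a degree bound (every node in $G$ has degree $\le 2m/b$, so a ball of radius $\ell$ has $\le(2m/b)^\ell$ nodes) combined with a counting argument over permutations.
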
 
Thus if we believe Conjecture~\ref{conj:main}, then even for randomized algorithms, there is no hope of exploiting integer input to improve over the simple external memory comparison based algorithms (when $B \geq \lg n$ such that the latter term in the lower bound is the min). 

Now observe that since our lower bound only counts I/Os, the lower bound immediately holds for word-RAM algorithms when the word size is some $b = \Omega(\lg n)$ by setting $m=O(b)$ and $B=b/w$ in the above lower bound (the CPU's internal state, i.e. registers, can hold only a constant number of words). Thus we get the following lower bound:
\begin{corollary}
\label{thm:RAM}
Assuming Conjecture~\ref{conj:main}, any randomized word-RAM algorithm for sorting $w = \Omega(\lg n)$ bit integers, having error probability at most $1/3$ and word size $b \geq w$ bits, must spend
$$
\Omega\Big(\min\Big\{n,\frac{n w}{b} \cdot \lg \frac{n w}{b} \Big\}\Big)
$$
time.
\end{corollary}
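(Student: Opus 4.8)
The plan is to observe that a word-RAM with word size $b$ is nothing but a special case of the external memory model, and then to substitute the appropriate parameters into Theorem~\ref{thm:main}. Given a word-RAM algorithm with word size $b \ge w$ that sorts $n$ integers of $w$ bits, I would regard its memory as an external disk partitioned into blocks of $B := \lfloor b/w \rfloor$ words of $w$ bits (so $B \ge 1$ by the hypothesis $b \ge w$, and one block has $\Theta(b)$ bits, i.e. it holds, up to a constant factor, exactly one $b$-bit machine cell), and I would equip the external memory algorithm with an internal memory of $m := \Theta(b)$ bits, that is $M = \Theta(B)$ words --- just enough to store the $O(1)$ machine words that make up the CPU state. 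Each step of the word-RAM reads and/or writes $O(1)$ memory cells and otherwise operates only on its registers; the external memory algorithm simulates such a step with $O(1)$ I/Os (fetch the relevant block(s) into internal memory, perform the arithmetic at no I/O cost, write the modified block(s) back). Hence a word-RAM algorithm that sorts in expected time $T$ with error probability at most $1/3$ becomes an external memory sorting algorithm making an expected $O(T)$ I/Os with the same error probability, run on an input of $n/B = \Theta(nw/b)$ blocks. (If one instead stores the integers one per $b$-bit cell, then merely reading the input costs $\Omega(n)$ steps, which already matches the first term of the claimed bound, so the packed layout above is the relevant case.)

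Now I would apply Theorem~\ref{thm:main} to this external memory algorithm. The hypothesis $w = \Omega(\lg n)$ carries over, and the parameters satisfy $M/B = \Theta(1)$; choosing the hidden constant so that $2M/B$ is a fixed constant strictly larger than $1$ makes $\lg_{2M/B}(n/B) = \Theta(\lg(n/B)) = \Theta(\lg(nw/b))$. Theorem~\ref{thm:main} then forces the external memory algorithm to make an expected
$$
\Omega\!\left(\min\!\left\{ n,\ \frac{n}{B}\,\lg_{2M/B}\frac{n}{B} \right\}\right) \;=\; \Omega\!\left(\min\!\left\{ n,\ \frac{nw}{b}\,\lg\frac{nw}{b} \right\}\right)
$$
I/Os. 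Since each step of the word-RAM algorithm induces only $O(1)$ of these I/Os, its expected running time $T$ must satisfy $T = \Omega(\min\{ n,\ (nw/b)\lg(nw/b)\})$, which is exactly the claimed bound.

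The reduction is essentially bookkeeping, and the only two points that deserve care are the following. First, one must check that it is legitimate to shrink the internal memory all the way down to $\Theta(b)$ bits; this is fine because a word-RAM retains only $O(1)$ words of $b$ bits between instructions and all non-I/O computation is free in the external-memory cost measure, so $m = \Theta(b)$ suffices to carry out the simulation. Second, one must ensure that the base $2M/B$ of the logarithm in Theorem~\ref{thm:main} really collapses to a constant: this needs $M \ge B$ (equivalently, the internal memory holds at least one whole block, which is consistent with a block and a cell both having $\Theta(b)$ bits) together with $b \ge w$ to guarantee $B \ge 1$ and $n/B = \Theta(nw/b)$. Since no property of the sorting problem beyond the external-memory I/O lower bound of Theorem~\ref{thm:main} is used, the same translation applies verbatim to any other problem for which we establish such an I/O lower bound, such as matrix transposition.
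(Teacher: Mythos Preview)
Your proposal is correct and follows essentially the same approach as the paper: view the word-RAM as an external memory machine with block size $B = b/w$ and internal memory $m = \Theta(b)$ bits (so $M = \Theta(B)$ and hence $2M/B$ is a constant), simulate each RAM step by $O(1)$ I/Os, and substitute into Theorem~\ref{thm:main}. The paper states this derivation in one sentence; your write-up adds the bookkeeping but the idea is identical.
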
 
We note that the a standard assumption in the word-RAM is a word size and key size of $b,w=\Theta(\lg n)$ bits. For that choice of parameters, our lower bound degenerates to the trivial $t = \Omega(n)$. This has to be the case, as Radix Sort gives a matching upper bound. Nonetheless, our lower bound shows that when the key size is much smaller than the word size, one cannot sort integers in linear time (recall linear is $O(n w/b)$ as this is the time to read/write the input/output).

Finally, we show that the obliviousness assumption made in the previous paper by Adler et al.~\cite{Adler:soda} allows one to prove very strong sorting lower bounds that even surpass the known (non-oblivious) Radix Sort upper bound:
\begin{theorem}
\label{thm:RAM}
Assuming Conjecture~\ref{conj:main}, any oblivious randomized word-RAM algorithm for sorting $\Theta(\lg n)$ bit integers, having error probability at most $1/3$ and word size $\Theta(\lg n)$, must spend
$
\Omega(n \lg n)
$
time.
\end{theorem}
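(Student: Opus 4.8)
The plan is to adapt the network‑coding reduction of Adler et al.\ to a \emph{payload‑carrying} sorting instance, exploiting that obliviousness freezes the access pattern. Fix an oblivious randomized word‑RAM sorter $\Alg$ with word and key size $\Theta(\lg n)$ and running time $t$; fix a random‑coin outcome so that the sequence of accessed locations $a_1,\dots,a_t$ is deterministic while the success probability stays $\ge 2/3$. From this fixed access sequence I build an ``execution graph'' $G_\Alg$: one vertex per CPU state, forming a path $p_1\to\cdots\to p_t$; for each memory cell, a path through the vertices representing its successive accessed states; and at step $j$, edges joining $p_j$ to the pre/post states of the single cell accessed at step $j$. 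This is a bounded‑degree DAG with $O(t)$ edges, each of capacity $w=\Theta(\lg n)$ bits, and crucially it is input‑independent because $\Alg$ is oblivious.

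Next I choose the hard family. Each of the $n$ input integers is the concatenation of a \emph{key} $\sigma(i)\in[n]$ (distinct, determined by a permutation $\sigma$) in the high‑order bits and a uniformly random \emph{payload} $X_i$ of $\ell=\Theta(\lg n)$ bits in the low‑order bits, so the whole record still fits in a $\Theta(\lg n)$‑bit word and integer comparison sorts exactly by $\sigma(\cdot)$. Correctly sorting then transports $X_i$ from the (fixed) cell holding input word $i$ to the (fixed) cell holding output word $\sigma(i)$; declaring $s_i$ to be the former and $t_{\sigma(i)}$ the latter, a correct execution of $\Alg$ is literally a network code on $G_\Alg$ for the $k=n$ source–sink pairs $\{(s_i,t_{\sigma(i)})\}_i$ that delivers $\ell$ bits per pair. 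Accounting for error $\le 1/3$ on a uniform $\ell$‑bit message via Fano's inequality still leaves a code of rate $\Omega(\ell)=\Omega(\lg n)$. Hence, by Conjecture~\ref{conj:main}, the multicommodity flow rate of $(G_\Alg,\{(s_i,t_{\sigma(i)})\}_i)$ is also $\Omega(\lg n)$.

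It remains to bound that MCF rate from above in terms of $t$. Here I use that $G_\Alg$ has bounded degree, so the number of vertices within distance $\eps\lg n$ of any fixed vertex is at most $n^{\eps\cdot O(1)}\le\sqrt n$ for a small constant $\eps$. For a uniformly random permutation $\sigma$, each sink $t_{\sigma(i)}$ is a uniformly random one of the $n$ fixed output vertices, so by linearity of expectation and Markov's inequality there is a permutation $\sigma^\star$ for which at least $\Omega(n)$ of the pairs $(s_i,t_{\sigma^\star(i)})$ are at distance $\Omega(\lg n)$ \emph{and} $\Alg$ still succeeds with constant probability conditioned on $\sigma=\sigma^\star$. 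For that instance, any feasible multicommodity flow of rate $r$ routes $r$ units across each pair, contributing total volume $\sum_i r\cdot\dist(s_i,t_{\sigma^\star(i)})=r\cdot\Omega(n\lg n)$, which cannot exceed the total capacity $\sum_e\mathrm{cap}(e)=O(tw)=O(t\lg n)$; thus $r=O(t/n)$. Combining the lower bound $r=\Omega(\lg n)$ with the upper bound $r=O(t/n)$ gives $t=\Omega(n\lg n)$.

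The main obstacle is twofold. First, the reduction must be made airtight for \emph{randomized, bounded‑error} algorithms: one has to invoke the robust form of the Li–Li statement (as already needed for Theorem~\ref{thm:main}) and argue that a sorter erring with probability $1/3$ still implements a code of positive rate, while keeping $G_\Alg$ and the source set genuinely input‑independent — which is exactly where obliviousness is used. Second, and more quantitatively, the step that pushes the bound past the trivial $\sqrt{n\lg n}$ to the tight $n\lg n$ is the ``bounded‑degree graph $\Rightarrow$ routing a random permutation costs $\Omega(n\lg n)$ total distance'' lemma; everything else is bookkeeping around packing $(\text{key},\text{payload})$ into one $\Theta(\lg n)$‑bit integer.
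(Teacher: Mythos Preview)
Your high‑level architecture matches the paper's: build an input‑independent execution graph from the oblivious access sequence, pick a permutation for which many source–sink pairs are $\Omega(\lg n)$ apart in that bounded‑degree graph, and compare the $\Omega(n\lg n)$ flow volume to the $O(t)$ total capacity. That part is fine.

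The gap is in how you pass from a sorter that errs with probability $1/3$ to a network coding solution to which Conjecture~\ref{conj:main} applies. The conjecture (as stated in Section~\ref{sec:prelim}) is about \emph{correct} network coding solutions: each sink must recover its source exactly, on every input. Your Fano argument only yields that each sink learns $\Omega(\ell)$ bits of mutual information about its source; this is not a network coding solution of rate $\Omega(\ell)$ in the sense the conjecture bounds, and no ``robust Li--Li'' statement is established or assumed anywhere in the paper. In particular, your claim that such a robust form is ``already needed for Theorem~\ref{thm:main}'' is a misreading: Theorem~\ref{thm:main} is also proved against the exact conjecture.

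What the paper does instead is the following. After fixing the randomness and the good permutation $\pi$, one isolates the set $\Fam\subseteq(\{0,1\}^w)^n$ of payload tuples on which $\Alg^*$ is \emph{exactly} correct; in the oblivious case $|\Fam|\ge 2^{nw-1}$ (Lemma~\ref{lem:far}). One then augments $G$ with a coordinator node $u$ and invokes Lemma~\ref{lem:coordinate}: the coordinator receives the independent sources $X_1,\dots,X_n$, computes short prefix‑free messages $R_i$ (of total expected length $o(nw)$ since $|\Fam|\ge 2^{nw-o(nw)}$) that let each source locally XOR its input into an element of $\Fam$, and sends the same $R_i$ to the sink so the XOR can be undone. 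This yields an \emph{exactly correct} network coding solution with independent sources, so Conjecture~\ref{conj:main} applies verbatim. The extra edges incident to $u$ carry at most $nw/2+o(nw)$ flow, which is absorbed by the slack in the $(4/5)n$ far‑pairs count. Plugging this into your volume argument gives $t=\Omega((n/B)\lg_{2M/B}(n/B))$, and setting $B,M=\Theta(1)$ yields $t=\Omega(n\lg n)$. Your proposal becomes complete once the Fano shortcut is replaced by this coordinator step.
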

Thus at least for the natural problem of integer sorting, being oblivious has a huge impact on the possible performance of algorithms. Our results are therefore not just an application of the previous technique to a new problem, but a great strengthening. Moreover, as we discuss in Section~\ref{sec:overview}, removing the obliviousness assumption requires new and deep ideas that result in significantly more challenging lower bound proofs.

\subsection{Lower Bounds for Matrix Transpose}
We also reprove an analog of the lower bounds by Adler et al.~\cite{Adler:soda} for the matrix transpose problem, this time without any assumptions of obliviousness. In the matrix transpose problem, the input is an $n \times n$ matrix $A$ with $w$-bit integer entries. The matrix is given in row-major order, meaning that each row of $A$ is stored in $n/B$ blocks of $B$ consecutive entries each. The goal is to compute $A^T$, i.e. output the column-major representation of $A$ which stores $n/B$ disk blocks for each column of $A$, each containing a consecutive range of $B$ entries from the column.

Based on Conjecture~\ref{conj:main}, \citeboth{Adler:soda} proved an $\Omega(B \lg B)$ lower bound on the number of I/Os needed for this problem when $n = B$, assuming that $M=2B$ and that the algorithm is oblivious. We strengthen the lower bound to the following:
\begin{theorem}
\label{thm:transposeIO}
Assuming Conjecture~\ref{conj:main}, any randomized algorithm for the external memory matrix transpose problem with $w$ bit integer entries, having error probability at most $1/3$, must make an expected 
$$
\Omega\left(\min\left\{\frac{n^2 \lg_{2M/B} B}{B} , \frac{n^2 w}{\lg(n^2 w)} \right\}\right)
$$
I/Os.
\end{theorem}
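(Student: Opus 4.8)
\emph{Proof idea.} The plan is to transplant the network-coding reduction behind Theorem~\ref{thm:main} from integer sorting to the fixed transpose permutation. Fix an algorithm $\Alg$ that transposes $n\times n$ matrices with $w$-bit integer entries using an expected $t$ I/Os and error at most $1/3$, and feed it a matrix whose entries are independent and uniformly random $w$-bit strings. Unroll the execution of $\Alg$ into a time-expanded communication network $G$ exactly as in the proof of Theorem~\ref{thm:main}: a node for each snapshot of the $m$-bit memory, a node for each (disk block, revision) pair, an edge of capacity $b=Bw$ bits for every I/O, edges of capacity $m$ bits joining consecutive memory snapshots, a source injecting the initial row-major blocks and a sink extracting the final column-major blocks. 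Correctness of $\Alg$ forces the information of each entry $A_{ij}$ to reappear inside the output block holding position $j\cdot n+i$; bundling together the entries that share one input block and whose destinations lie in one common output block turns this into a network-coding instance that $\Alg$, run inside $G$, solves.

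By Conjecture~\ref{conj:main}, the coding rate of an undirected network equals its multicommodity-flow rate, so $G$ must also admit a fractional multicommodity flow routing all of these demands. The main combinatorial step is then to show that, unless $t$ is at least the claimed bound, $G$ is too sparse to carry such a flow. Here I would run an Aggarwal--Vitter-style cut/potential argument specialized to transpose. The structural fact to exploit is that, for $B\le n$, the $B$ entries of any output block originate in $B$ distinct and pairwise far-apart input blocks, and symmetrically the $B$ entries of any input block scatter over $B$ distinct output blocks; thus transpose is ``as mixing as'' a random permutation, but the relevant family of cuts carries the parameter $B$ rather than $n^2/B$. Bounding how much demand a single I/O --- a capacity-$b$ edge --- can discharge across this family, together with the $m$-bit memory bottleneck, yields $t=\Omega\big((n^2/B)\,\lg_{2M/B}B\big)$, the first term of the bound.

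The second term, $n^2w/\lg(n^2w)$, enters as the ceiling of what this route can certify: the time-expanded graph $G$ has $\mathrm{poly}(n,w)$ nodes, and translating a length-$t$ execution into a coding rate and then, via Conjecture~\ref{conj:main}, back into a flow costs a $\Theta(\lg(n^2w))$ factor, so the argument can only force up to roughly $n^2w/\lg(n^2w)$ I/Os; the stated bound is the minimum of the two. Making the entries independent and uniform is what licenses treating $\Alg$ as a genuine network code: moving $A_{ij}$ then genuinely costs $w$ bits of transmission through $G$ and cannot be shortcut by any structure of the (fixed) transpose map. The $1/3$ error probability, and the fact that $\Alg$ need not be oblivious, are absorbed exactly as in the proof of Theorem~\ref{thm:main}: average over the input and the internal randomness, condition on a good run, and note that the reduction only needs the realized network to be well defined for that run, never that the access pattern be input-independent. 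Divisibility of $n$ by $B$ and the regime $B>n$ are handled by padding and rounding, affecting constants only.

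\textbf{Main obstacle.} As with Theorem~\ref{thm:main}, the crux is making the coding-to-flow reduction robust to adaptivity: since $G$ is a function of the random input, one cannot reason on a single static graph --- as \citeboth{Adler:soda} could in the oblivious case --- but must show that a typical realized network still simultaneously witnesses all the transpose demands and still obeys Conjecture~\ref{conj:main} with the right quantitative parameters; re-deriving the transpose cut bound \emph{inside} that machinery is where the new techniques are indispensable. A secondary but delicate point is pinning down the logarithmic loss precisely enough that the two terms of the stated bound meet cleanly across all ranges of $M$, $B$ and $w$.
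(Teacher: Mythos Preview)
Your proposal has a genuine gap in the treatment of adaptivity, and it mischaracterizes where the second term of the bound comes from. You write that non-obliviousness is ``absorbed'' by averaging, conditioning on a good run, and noting that the realized network is well-defined for that run. But once you condition on a particular I/O-graph $G$, the input matrix is no longer a product distribution: only those matrices that \emph{happen to induce} $G$ survive, and network coding rate is defined only for independent sources. The paper does not simply condition on a run. It (i) fixes the randomness to obtain a deterministic $\Alg^*$, (ii) counts I/O-graphs to find one graph $G$ shared by a large family $\Gamma$ of inputs (Lemma~\ref{lem:manymatrices}), and (iii) introduces a coordinator node $u$ that, via Lemma~\ref{lem:coordinate}, sends short messages letting each source XOR its independent uniform $X_{i,j}$ into some $A_{i,j}$ with $A\in\Gamma$, so that the simulation of $\Alg^*$ on $G$ is actually valid. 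Step~(iii) is the heart of the non-oblivious argument, and your sketch omits it entirely; without it you do not obtain a legitimate network coding instance.

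This omission also explains why your account of the second term is off. The $n^2w/\lg(n^2w)$ term is not a generic ``translation loss'' in invoking Conjecture~\ref{conj:main}; it is the alternative branch of a dichotomy on $|\Gamma|$. Either $|\Gamma|\ge 2^{n^2w-o(n^2w)}$, in which case the coordinator's messages have total expected length $o(n^2w)$ and the distance argument forces $t=\Omega((n^2/B)\lg_{2M/B}B)$; or $|\Gamma|$ is small, which together with the I/O-graph count $(t+2n^2/B+1)^{t+1}$ forces $t\lg(t+2n^2/B)=\Omega(n^2w)$ directly. Finally, the combinatorial core is not an Aggarwal--Vitter cut/potential argument but a simple path-length bound (Lemma~\ref{lem:transposelong}): every node in $G$ has degree at most $2m/b$, and the $B$ entries of one input block land in $B$ distinct output blocks, so at most $\sqrt{B}$ of them can lie within distance $(1/2)\lg_{2m/b}B$. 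The flow lower bound then follows from total capacity versus total path length, not from cuts.
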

Notice that the first term in the min corresponds to their lower bound when $n=B$ (they have $M=2B$), but our new lower bound does not require the algorithm to be oblivious (and allows randomization). It is unclear whether the second term in the min is just an artifact of our proof, but we remark that it can only be the minimum when $B$ is very small.

Consider now the matrix transpose problem on the word-RAM with word size $b$ bits (and thus memory size $m=O(b)$). Given an $n \times n$ matrix $A$ with $w$-bit integer entries, the lower bound in Theorem~\ref{thm:transposeIO} implies (by setting $B = b/w$):
\begin{corollary}
Assuming Conjecture~\ref{conj:main}, any randomized word-RAM algorithm for computing the transpose of an $n \times n$ matrices with $w$-bit integer entries, having error probability at most $1/3$ and word size $b$ bits, must spend
$$
\Omega\left(\min\left\{\frac{n^2w \lg(b/w)}{b}, \frac{n^2w}{\lg(n^2 w)}  \right\}\right)
$$
time.
\end{corollary}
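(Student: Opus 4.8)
The plan is to derive this corollary directly from Theorem~\ref{thm:transposeIO} through the standard reduction that reinterprets a word-RAM computation as an external memory computation. Given a word-RAM with word size $b \geq w$ bits, whose $O(1)$ registers hold $m = O(b)$ bits in total, I would identify each $b$-bit memory cell with one disk block; a block then stores $B := b/w$ consecutive $w$-bit entries, the internal memory holds $M = m/w = O(b/w) = O(B)$ words, and the input matrix laid out in row-major order across memory cells is precisely the row-major disk layout required by the external memory transpose problem (and the column-major output corresponds similarly). Since each word-RAM instruction accesses only $O(1)$ memory cells, a word-RAM transpose algorithm running in time $t$ induces an external memory transpose algorithm making $O(t)$ I/Os with the same error probability, so Theorem~\ref{thm:transposeIO} yields
$$
t = \Omega\!\left(\min\!\left\{\frac{n^2 \lg_{2M/B} B}{B},\ \frac{n^2 w}{\lg(n^2 w)}\right\}\right).
$$

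It remains only to simplify the first term under this parameterization. The internal memory must hold at least one block while the CPU has only a constant number of registers, so $M = \Theta(B)$ and in particular the base $2M/B$ is a constant bounded away from $1$; hence $\lg_{2M/B} B = \Theta(\lg B)$. Substituting $B = b/w$ turns the first term into $\Theta(n^2 w \lg(b/w)/b)$ while leaving the second term unchanged, which is exactly the claimed bound.

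I do not expect a genuine obstacle: all the real work has already been done in Theorem~\ref{thm:transposeIO}, and what is left is only to verify that the simulation is I/O-efficient (each instruction is $O(1)$ I/Os) and respects the memory bound $m = O(b)$, together with the harmless change of logarithm base, which costs a constant factor precisely because $2M/B = \Theta(1)$. The one point worth noting is the degenerate regime $b = \Theta(w)$, where the first term collapses to the trivial linear bound $\Theta(n^2 w / b)$; this is unavoidable, as merely reading the $n^2$ entries of $w$ bits each already takes $\Theta(n^2 w / b)$ time, mirroring the analogous degeneracy observed for integer sorting.
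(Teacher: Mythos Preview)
Your proposal is correct and follows essentially the same approach as the paper: the paper derives the corollary by instantiating Theorem~\ref{thm:transposeIO} with $B = b/w$ and $m = O(b)$ (so $M = \Theta(B)$ and the logarithm base $2M/B$ is a constant), which is exactly your argument. Your writeup supplies more explicit detail about the word-RAM-to-external-memory simulation and the degenerate $b = \Theta(w)$ case, but the substance is the same.
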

The above corollary in particular implies that for $b \times b$ bit matrices and word size $b$ bits, one needs $\Omega(b \lg b)$ time to transpose, whereas linear time would have been $O(b)$ as this is the time to read/write the input/output.

\subsection{Other Related Work}
\label{sec:prior}
Proving lower bounds for external memory algorithms and data structures without assumptions such as indivisibility and comparison based has been the focus of a number of recent papers. Quite surprisingly, \citeboth{iacono2012using} showed that for the \emph{dictionary} problem, one can indeed exploit integer input to beat the comparison based bounds. This is in sharp contrast to our new results for integer sorting. \citeboth{iacono2012using} complemented their upper bound by a matching unconditional lower bound proved in a version of the cell probe model of \citeboth{MR624736} adapted to the external memory setting. Their lower bound improved over previous work by \citeboth{zhang} and \citeboth{verbin}. In other very recent work, \citeboth{larsen:pqupper} showed how to exploit integer input to develop external memory priority queues with DecreaseKeys that outperform their comparison based counterparts. Their upper bound almost matches an unconditional lower bound by \citeboth{eenberg:pq} (also proved in the cell probe model).

Another active line of research has studied the benefits of network coding over the traditional routing-based solutions. Research on the network coding started by the work of \citeboth{ahlswede2000network}. They provided some examples which represent the benefits of network coding in the directed graphs. Later, the benefit of network coding in directed graph is considered in a sequence of works \cite{koetter2003algebraic,harvey2006capacity,harvey2004comparing}. It is known that there exists a family of directed graphs $G$ in which the gap between the coding rate and the MCF rate could be as large as $\Omega(|G|)$. 

Unlike directed graphs, it is conjectured by \citeboth{lili} (Conjecture~\ref{conj:main}) that in undirected graphs, the coding rate is equal to the MCF rate. Despite the persistent effort, this conjecture has been remained open for almost two decades. However, it has been shown that the conjecture holds in various classes of graphs. Specifically, it is known that the conjecture holds when the sparsity of the graph is equal to the MCF rate. Additionally, it is known that the coding rate cannot exceed the flow rate by more than a factor $\lg |G|$. This follows by relating the flow rate and coding rate to the sparsest cut. Other work by \cite{harvey2006capacity,jain2006capacity} showed the correctness of the conjecture for an infinite family of bipartite graphs. Also, in a recent paper by \citeboth{braverman2016network}, it is shown that if there is a graph where the coding rate exceeds the flow rate by a factor of $(1+\epsilon)$ for any constant $\eps>0$, then one can find an infinite family of graphs $\{G\}$ where the gap is a factor of $(\log |G|)^c$, where $0 < c < 1$ is a positive constant. It is also worth mentioning that a study by \citeboth{li2005achieving} gives empirical support for the conjecture and the paper~\cite{reductionapproach} uses exhaustive computer search to prove the conjecture for networks of up to six nodes.

\section{Preliminaries}
\label{sec:prelim}
We now give a formal definition of the $k$-pairs communication problem and the Multicommodity Flow problem.

\paragraph{$k$-pairs communication problem.} To keep the definition as simple as possible, we restrict ourselves to directed acyclic communication networks/graphs and we assume that the demand between every source-sink pair is the same. This will be sufficient for our proofs. For a more general definition, we refer the reader to~\cite{Adler:soda}.

The input to the $k$-pairs communication problem is a directed acyclic graph $G=(V,E)$ where each edge $e \in E$ has a capacity $c(e) \in \R^+$. There are $k$ sources $s_1,\dots,s_k \in V$ and $k$ sinks $t_1,\dots,t_k \in V$. Typically there is also a demand $d_i$ between each source-sink pair, but for simplicity we assume $d_i = 1$ for all pairs. This is again sufficient for our purposes.

Each source $s_i$ receives a message $A_i$ from a predefined set of messages $A(i)$. It will be convenient to think of this message as arriving on an in-edge. Hence we add an extra node $S_i$ for each source, which has a single out-edge to $s_i$. The edge has infinite capacity.

A network coding solution specifies for each edge $e \in E$ an alphabet $\Gamma(e)$ representing the set of possible messages that can be sent along the edge. For a node $v \in V$, define $\In(u)$ as the set of in-edges at $u$. A network coding solution also specifies, for each edge $e=(u,v) \in E$, a function $f_e : \prod_{e' \in \In(u)} \Gamma(e') \to \Gamma(e)$ which determines the message to be sent along the edge $e$ as a function of all incoming messages at node $u$. Finally, a network coding solution specifies for each sink $t_i$ a decoding function $\sigma_i : \prod_{e \in \In(t_i)} \Gamma(e) \to M(i)$. The network coding solution is correct if, for all inputs $A_1,\dots,A_k \in \prod_i A(i)$, it holds that $\sigma_i$ applied to the incoming messages at $t_i$ equals $A_i$, i.e. each source must receive the intended message.

In an execution of a network coding solution, each of the extra nodes $S_i$ starts by transmitting the message $A_i$ to $s_i$ along the edge $(S_i,s_i)$. Then, whenever a node $u$ has received a message $a_e$ along all incoming edges $e=(v,u)$, it evaluates $f_{e'}(\prod_{e \in \In(u)} a_e)$ on all out-edges and forwards the message along the edge $e'$.

Following Adler et al.~\cite{Adler:soda} (and simplified a bit), we define the \emph{rate} of a network coding solution as follows: Let each source receive a uniform random and independently chosen message $A_i$ from $A(i)$. For each edge $e$, let $A_e$ denote the random variable giving the message sent on the edge $e$ when executing the network coding solution with the given inputs. The network coding solution achieves rate $r$ if:
\begin{itemize}
\item $H(A_i) \geq r d_i = r$ for all $i$.
\item For each edge $e \in E$, we have $H(A_e) \leq c(e)$.
\end{itemize}
Here $H(\cdot)$ denotes binary Shannon entropy. The intuition is that the rate is $r$, if the solution can handle upscaling the entropy of all messages by a factor $r$ compared to the demands.

\paragraph{Multicommodity Flow.}
A multicommodity flow problem in an undirected graph $G=(V,E)$ is specified by a set of $k$ source-sink pairs $(s_i,t_i)$ of nodes in $G$. We say that $s_i$ is the source of commodity $i$ and $t_i$ is the sink of commodity $i$. Each edge $e \in E$ has an associated capacity $c(e) \in \R^+$. In addition, there is a demand $d_i$ between every source-sink pair. For simplicity, we assume $d_i = 1$ for all $i$ as this is sufficient for our needs.

A (fractional) solution to the multicommodity flow problem specifies for each pair of nodes $(u,v)$ and commodity $i$, a flow $f_i(u,v) \in [0,1]$. Intuitively $f_i(u,v)$ specifies how much of commodity $i$ that is to be sent from $u$ to $v$. The flow satisfies \emph{flow conservation}, meaning that:
\begin{itemize}
\item For all nodes $u$ that is not a source or sink, we have $\sum_{w \in V} f_i(u,w) - \sum_{w \in V} f_i(w,u) = 0$.
\item For all sources $s_i$, we have $\sum_{w \in V} f_i(s_i,w) - \sum_{w \in V}f_i(w,s_i) = 1$.
\item For all sinks we have $\sum_{w \in V} f_i(w,t_i) - \sum_{w \in V} f_i(t_i,w) = 1$.
\end{itemize}
The flow also satisfies that for any pair of nodes $(u,v)$ and commodity $i$, there is only flow in one direction, i.e. either $f_i(u,v)=0$ or $f_i(v,u)=0$. Furthermore, if $(u,v)$ is not an edge in $E$, then $f_i(u,v) = f_i(v,u)=0$. A solution to the multicommodity flow problem achieves a rate of $r$ if:
\begin{itemize}
\item For all edges $e=(u,v) \in E$, we have $r \cdot \sum_i d_i (f_i(u,v) + f_i(v,u)) = r \cdot \sum_i (f_i(u,v) + f_i(v,u)) \leq c(e)$.
\end{itemize}
Intuitively, the rate is $r$ if we can upscale the demands by a factor $r$ without violating the capacity constraints.

\paragraph{The Undirected $k$-pairs Conjecture.} Conjecture~\ref{conj:main} implies the following for our setting: Given an input to the $k$-pairs communication problem, specified by a directed acyclic graph $G$ with edge capacities and a set of $k$ source-sink pairs with a demand of $1$ for every pair, let $r$ be the best achievable network coding rate for $G$. Similarly, let $G'$ denote the undirected graph resulting from making each directed edge in $G$ undirected (and keeping the capacities, source-sink pairs and a demand of $1$ between every pair). Let $r'$ be the best achievable flow rate in $G'$. Conjecture~\ref{conj:main} implies that $r \leq r'$.

Having defined coding rate and flow rate formally, we also mention that the result of \citeboth{braverman2016network} implies that if there exists a graph $G$ where the network coding rate $r$, and the flow rate $r'$ in the corresponding undirected graph $G'$, satisfies $r \geq (1+\eps)r'$ for a constant $\eps>0$, then there exists an infinite family of graphs $\{G^*\}$ for which the corresponding gap is at least $(\lg |G^*|)^c$ for a constant $c>0$. So far, all evidence suggest that no such gap exists, as formalized in Conjecture~\ref{conj:main}.

\section{Proof Overview}
\label{sec:overview}
In this section, we give an overview of the main ideas in our proof and explain the barriers we overcome in order to remove the assumption of obliviousness. To prove our lower bound for external memory sorting, we focus on the easier problem of permuting. In the permutation problem, we are given an array $A$ of $n$ entries. The $i$'th entry of $A$ stores a $w$-bit data item $d_i$ and a destination $\pi(i)$. The destinations $\pi(i)$ form a permutation $\pi$ of $\{1,\dots,n\}$. The goal is to produce the output array $C$ where $d_i$ is stored in entry $C[\pi(i)]$. The arrays $A$ and $C$ are both stored in disk blocks, such that each disk block of $A$ stores $(\lg n + w)/b$ entries, and each disk block of $C$ stores $w/b$ entries (the maximum number of entries that can be packed in a block). A sorting algorithm that can sort $(\lg n + w)$ bit integer keys can be used to solve the permutation problem by replacing each entry $(\pi(i),d_i)$ with the integer $\pi(i) \cdot 2^w + d_i$ (in the addition, we think of $d_i$ as an integer in $[2^w]$). Thus it suffices to prove lower bounds for permuting.

Consider now an algorithm $\Alg$ for permuting, and assume for simplicity that it is deterministic and always correct. As in the previous work by Adler et al.~\cite{Adler:soda}, we define a graph $G(A)$ that captures the memory accesses of $\Alg$ on an input array $A$. The graph $G$ has a node for every block in the input array, a node for every block in the output and a node for every intermediate block written/read by $\Alg$. We call these block nodes. Moreover, the graph has a memory node that represent the memory state of $\Alg$. The idea is that whenever $\Alg$ reads a block into memory, then we add a directed edge from the corresponding block node to the memory node. When $\Alg$ writes to a block, we create a new node (that replaces the previous version of the block) and add a directed edge from the memory node to the new node. The algorithm $\Alg$ can now be used to send messages between input and output block nodes as follows: Given messages $X_1,\dots,X_n$ of $w$ bits each and an intended output block node (storing $C[\pi(i)]$) for each message $i$, we can transmit the message $X_i$ from the input block node representing the array entry $A[i]$ to the output block node representing the array entry $C[\pi(i)]$ simply by simulating the algorithm $\Alg$: Each block node of the network always forward any incoming message to the memory node along its outgoing edge. The memory node thus receives the contents of all blocks that it ever reads. It can therefore simulate $\Alg$. Whenever it performs a write operation, it sends the contents along the edge to the designated block node. By the correctness of $\Alg$, this results in every output block node knowing the contents of all array entries $C[\pi(i)]$ that should be stored in that output block. Examining this simulation, we see that we need a capacity of $b$ bits on all edges for the simulation to satisfy capacity constraints. Moreover, by the definition of network coding rate (Section~\ref{sec:prelim}), we see that the coding rate is $w$ bits.

The idea is that we want to use Conjecture~\ref{conj:main} to argue that the graph $G$ must be large (i.e. there must be many I/Os). To do so, we would like to argue that if we undirect $G$, then there is a permutation $\pi$ such that for many pairs $A[i]$ and $C[\pi(i)]$, there are no short paths between the block nodes storing $A[i]$ and $C[\pi(i)]$. If we could argue that for $n/2$ pairs $(A[i],C[\pi(i)])$, there must be a distance of at least $\ell$ steps in the undirected version of $G$, then to achieve a flow rate of $w$, it must be the case that the sum of capacities in $G$ is at least $\ell wn/2$. But each I/O adds only $2b$ bits of capacity to $G$. Thus if $\Alg$ makes $t$ I/Os, then it must be the case that $tb = \Omega(\ell wn) \Rightarrow t = \Omega((nw/b) \cdot \ell) = \Omega((n/B) \cdot \ell)$.

Unfortunately, we cannot argue that there must be a long path between many pairs in the graph $G$ we defined above. The problem is that the memory node is connected to all block nodes and thus the distance is never more than $2$. To fix this, we change the definition of $G$ slightly: After every $m/b$ I/Os, we deactivate the memory node and create a new memory node to replace it. Further I/Os insert edges to and from this new memory node. In order for the new memory node to continue the simulation of $\Alg$, the new memory node needs to know the memory state of $\Alg$. Hence we insert a directed edge from the old deactivated memory node to the new memory node. The edge has capacity $m$ bits. Thus in the simulation, when the current memory node has performed $m/b$ I/Os, it forwards the memory state of $\Alg$ to the next memory node who continues the simulation. The $m/b$ I/Os between the creation of new memory nodes has been chosen such that the amortized increase in capacity due to an I/O remains $O(b)$. 

We have now obtained a graph $G$ where the degrees of all nodes are bounded by $2m/b$. Thus for every node $G$, there are at most $(2m/b)^{\ell}$ nodes within a distance of $\ell$. Thus intuitively, a random permutation $\pi$ should have the property that for most pairs $(A[i],C[\pi(i)])$, there will be a distance of $\ell=\Omega(\lg_{2m/b} n/B)$ between the corresponding block nodes. This gives the desired lower bound of $t = \Omega((n/B) \cdot \ell) =  \Omega((n/B) \cdot \lg_{2m/b} n/B)$.

If we had assumed that the algorithm $\Alg$ was oblivious as in previous work, we would actually be done by now. This is because, under the obliviousness assumption, the graph $G$ will be the same \emph{for all} input arrays. Thus one can indeed find the desired permutation $\pi$ where there is a large distance between most pairs $(A[i],C[\pi(i)])$. Moreover, all inputs corresponding to that permutation $\pi$ and data bit strings $d_1,\dots,d_n$ can be simulated correctly using $\Alg$ and the graph $G$. Hence one immediately obtains a network coding solution. However, when $\Alg$ is not constrained to be oblivious, there can be a large number of distinct graphs $G$ resulting from the execution of $\Alg$.

To overcome this barrier, we first argue that even though there can be many distinct graphs, the number of such graphs is still bounded by roughly $(nw/b + t)^t$ (each I/O chooses a block to either read or write and there are $t$ I/Os). This means that for $t = o(n)$, one can still find a graph $G$ that is the result of running $\Alg$ on many different input arrays $A$. We can then argue that amongst all those inputs $A$, there are many that all correspond to the same permutation $\pi$, and that permutation $\pi$ has the property from before that, for most pairs $(A[i],C[\pi(i)])$, there will be a distance of $\ell=\Omega(\lg_{2m/b} n/B)$ between the corresponding block nodes. Thus we would like to fix such a permutation and use $\Alg$ to obtain a network coding solution. The problem is that we can only argue that there are \emph{many} data bit strings $d_1,\dots,d_n$ that together with $\pi$ result in an array $A$ for which $\Alg$ uses the graph $G$. Thus we can only correctly transmit a large collection of messages, not all messages. Let us call this collection $\Fam \subseteq \{\{0,1\}^w\}^n$ and let us assume $|\Fam| \geq 2^{nw-o(nw)}$. Intuitively, if we draw a uniform random input from $\Fam$, then we should have a network coding solution with a rate of $w-o(w)$. The problem is, that the definition of network coding requires the inputs to the nodes to be independent. Thus we cannot immediately say that we have a network coding solution with rate $w - o(w)$ by solving a uniform random input from $\Fam$. To remedy this, we instead take the following approach: We let each data bit string $d_i$ be a uniform random and independently chosen $w$-bit string. Thus if we can solve the network coding problem with these inputs, then we indeed have a network coding solution. We would now like to find an efficient way of translating the bit strings $d_1,\dots,d_n$ to new bit strings $d'_1,\dots,d'_n$ with $d'_1,\dots,d'_n \in \Fam$. The translation should be such that each input block node can locally compute the $d'_i$, and the output block nodes should be able to revert the transformation, i.e. compute from $d'_i$ the original bit string $d_i$. To achieve this, we need to modify $G$ a bit. Our idea is to introduce a coordinator node that can send short descriptions of the mappings between the $d_i$s and $d'_i$s. We accomplish this via the following lemma that we prove in Section~\ref{sec:coordinate}:
\begin{lemma}
\label{lem:coordinate}
Consider a communication game with a coordinator $u$, a set $\Fam \subseteq \{0,1\}^{nw}$ and $n$ players. Assume $|\Fam| \geq 2^{nw-r}$ for some $r$. The coordinator receives as input $n$ uniform random bit strings $X_i$ of $w$ bits each, chosen independently of the other $X_j$. The coordinator then sends a prefix-free message $R_i$ to the $i$'th player for each $i$. From the message $R_i$ alone (i.e. without knowing $X_i$), the $i$'th player can then compute a vector $\tau_i \in \{0,1\}^w$ with the property that the concatenation $q := (\tau_1 \oplus X_1) \circ (\tau_2 \oplus X_2) \circ \cdots \circ (\tau_n \oplus X_n)$ satisfies $q \in \Fam$, where $\oplus$ denotes bit wise XOR. There exists such a protocol where
$$
\sum_i \E[|R_i|] = O\left(n+ \sqrt{nwr}\lg(nw/r)\right).
$$
In particular, if $r=o(nw)$ and $w=\omega(1)$ then the communication satisfies:
$$
\sum_i \E[|R_i|] = o(nw).
$$
\end{lemma}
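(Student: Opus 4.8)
The plan is to reduce Lemma~\ref{lem:coordinate} to a geometric fact about dense subsets of the Hamming cube. Since player $i$ must reconstruct $\tau_i$ from $R_i$ alone, we may assume the protocol fixes, for each $i$, a prefix-free code $C_i:\{0,1\}^w\to\{0,1\}^*$ (allowed to depend on $\Fam$ and $w$, but not on $X$) and sets $R_i:=C_i(\tau_i)$. Writing $X=X_1\circ\cdots\circ X_n$ and any target as $q=q_1\circ\cdots\circ q_n$, the constraints $q\in\Fam$ and $\tau_i=q_i\oplus X_i$ mean the coordinator's only freedom is the choice of $q\in\Fam$. I would take $q(X)$ to be a Hamming-nearest point of $\Fam$ to $X$ in $\{0,1\}^{nw}$, ties broken lexicographically (a deterministic, valid protocol). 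Then $\tau:=q(X)\oplus X$ has Hamming weight exactly $d(X,\Fam):=\min_{f\in\Fam}\|f\oplus X\|_1$, and the task splits into (i) controlling the distribution of $d(X,\Fam)$ for uniform $X$, and (ii) designing the $C_i$ so that a vector of small total weight, cut into $n$ blocks, is cheap to transmit blockwise.

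The heart of the proof is step (i). The map $X\mapsto d(X,\Fam)$ is $1$-Lipschitz with respect to single-bit changes, so McDiarmid's bounded-differences inequality gives $\Pr[\,|d(X,\Fam)-\mu|>s\,]\le 2e^{-2s^2/(nw)}$ for a median $\mu$. On the other hand $\Pr[d(X,\Fam)=0]=|\Fam|/2^{nw}\ge 2^{-r}$, and applying the lower tail with $s=\mu-1$ (when $\mu\ge1$) forces $2^{-r}\le e^{-2(\mu-1)^2/(nw)}$, hence $\mu=O(\sqrt{nwr})$. Together with the upper tail this yields $\E_X[d(X,\Fam)]=O(\sqrt{nwr})$, and it is exactly here that the $\sqrt{nwr}$ of the statement is born. (One could alternatively invoke Harper's vertex-isoperimetric inequality, whose extremal sets are Hamming balls, to get the same bound; the Lipschitz argument is cleaner and self-contained.)

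For step (ii), I would use a per-block code in which $\tau_i=0^w$ costs $O(1)$ bits, while a nonzero $\tau_i$ of Hamming weight $h_i$ costs $O(\lg h_i)+\lceil\lg\binom{w}{h_i}\rceil+O(1)$ bits: Elias-code the weight, then name the support. The $O(1)$-per-block term sums to $O(n)$. Writing $k=|\{i:\tau_i\neq 0\}|\le\min\{n,d(X,\Fam)\}$ and using $\sum_i h_i=d(X,\Fam)$, the bounds $\lg\binom{w}{h_i}\le h_i\lg(ew/h_i)$ together with concavity of $h\mapsto h\lg(ew/h)$ and of $\lg$ bound the remaining cost by $O\!\big(k\lg(1+d(X,\Fam)/k)+d(X,\Fam)\lg(enw/d(X,\Fam))\big)$, i.e. by $C_1 n+C_2\,d(X,\Fam)\lg(enw/d(X,\Fam))+C_3\,d(X,\Fam)$, which is a concave function of $d(X,\Fam)$ on $[0,nw]$. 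Taking expectations over $X$ and applying Jensen's inequality, using $\E_X[d(X,\Fam)]=O(\sqrt{nwr})$ from step (i), gives $\sum_i\E[|R_i|]=O\!\big(n+\sqrt{nwr}\lg(nw/r)\big)$. The ``in particular'' clause then follows: for $r=o(nw)$, writing $g=nw/r\to\infty$ we have $\sqrt{r}\,\lg(nw/r)=\sqrt{nw}\cdot(\lg g)/\sqrt{g}=o(\sqrt{nw})$, so $\sqrt{nwr}\,\lg(nw/r)=o(nw)$, while $n=o(nw)$ when $w=\omega(1)$.

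The main obstacle is the geometric input of step (i): establishing that a subset of the $nw$-dimensional cube of density $2^{-r}$ lies within Hamming distance $O(\sqrt{nwr})$ of a uniformly random point, with concentration strong enough to survive composition with the concave block-encoding cost. The rest is bookkeeping—choosing the per-block prefix code so the additive term is exactly $O(n)$ rather than $O(n\,\mathrm{polylog})$, and applying the concavity/Jensen steps correctly when passing $k$ and $d(X,\Fam)$ through the cost function—which is routine but needs to be done carefully.
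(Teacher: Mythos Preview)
Your proposal is correct and matches the paper's high-level strategy: take $q(X)$ to be a Hamming-nearest element of $\Fam$, encode each block $\tau_i$ by its Hamming weight followed by its support, and reduce everything to showing $\E[d(X,\Fam)]=O(\sqrt{nwr})$. The encoding analysis in your step~(ii), including the two applications of Jensen, is essentially the paper's computation.

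The genuine difference is in step~(i). The paper does \emph{not} use concentration of measure. Instead it builds an explicit coupling: draw $\kappa_1,\dots,\kappa_{nw}$ uniform in $[0,1]$, let $Z_i=\mathbf{1}[\kappa_i\le 1/2]$ and $Y_i=\mathbf{1}[\kappa_i\le \Pr[D_i=1\mid D_{<i}=Y_{<i}]]$ where $D$ is uniform in $\Fam$. Then $Z$ is uniform in the cube, $Y$ is uniform in $\Fam$, and $\Pr[Z_i\neq Y_i\mid Y_{<i}]=|1/2-\Pr[D_i=1\mid D_{<i}=Y_{<i}]|$. Pinsker's inequality turns each such gap into $\sqrt{(1-H(Y_i\mid Y_{<i}=y_{<i}))/2}$, and Cauchy--Schwarz plus the chain rule collapse the sum to $\sqrt{nw(nw-H(Y))/2}\le\sqrt{nwr/2}$. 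This is lifted from the $\sqrt{IC}$ compression argument of Barak, Braverman, Chen and Rao.

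Your route via the $1$-Lipschitz property of $X\mapsto d(X,\Fam)$ and bounded differences is shorter, more elementary, and avoids any information-theoretic machinery; it also makes the isoperimetric nature of the bound transparent (as you note, Harper's inequality would do the same job). The paper's argument, by contrast, gives an explicit constant and ties the lemma to the protocol-compression literature that motivated it. One small slip: McDiarmid's inequality concentrates around the \emph{mean}, not the median. If you let $\mu=\E[d(X,\Fam)]$ directly, then $\Pr[d\le 0]\le e^{-2\mu^2/(nw)}$ combined with $\Pr[d=0]\ge 2^{-r}$ already gives $\mu\le\sqrt{(r\ln 2)\,nw/2}$ with no further work, and you avoid the detour through the upper tail (which would only recover $\E[d]=\mu+O(\sqrt{nw})$ and lose the bound for sub-constant $r$).
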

We use the lemma as follows: We create a coordinator node $u$ that is connected to all input block nodes and all output block nodes. In a simulation of $\Alg$, the input block nodes start by transmitting their inputs to the coordinator node $u$. The coordinator then computes the messages in the lemma and sends $R_i$ back to the input block node storing $A[i]$ as well as to the output block node storing the array entry $C[\pi(i)]$. The input block nodes can now compute $d'_i = \tau_i \oplus d_i$ to obtain an input $d'_1,\dots,d'_n \in \Fam$. We can then run the algorithm $\Alg$ since this is an input that actually results in the graph $G$. Finally, the output block nodes can revert the mapping by computing $d_i = \tau_i \oplus d'_i$. Thus what the lemma achieves, is an efficient way of locally modifying the inputs of the nodes, so as to obtain an input for which the algorithm $\Alg$ works. We find this contribution very novel and suspect it might have applications in other lower bound proofs.

The introduction of the node $u$ of course allows some flow to traverse paths not in the original graph $G$. Thus we have to be careful with how we set the capacities on the edges to and from $u$. We notice that edges from the input nodes to $u$ need only a capacity of $w$ bits per array entry (they send the inputs), and edges out of $u$ need $\E[|R_i|]$ capacity for an input $d_i$ (one such edge to the input block node for array entry $A[i]$ and one such edge to the output block node for array entry $C[\pi(i)]$). The crucial observation is that any flow using the node $u$ as an intermediate node, must traverse at least two edges incident to $u$. Hence only $(nw + 2 \sum_i \E[|R_i|])/2$ flow can traverse such paths. If $|\Fam| \geq 2^{nw-o(nw)}$ then Lemma~\ref{lem:coordinate} says that this is no more than $nw/2 + o(nw)$ flow. There therefore remains $nw/2 - o(nw)$ flow that has to traverse the original length $\ell= \Omega(\lg_{2m/b} n/B)$ paths and the lower bound follows. 

One may observe that our proof uses the fact that the network coding rate is at most the flow rate in a strong sense. Indeed, the introduction of the node $u$ allows a constant fraction of the flow to potentially use a constant length path. Thus it is crucial that the network coding rate $r$ and flow rate $r'$ is conjectured to satisfy $r \leq r'$ and not e.g. $r \leq 3r'$. Indeed we can only argue that a too-good-to-be-true permutation algorithm yields a graph in which $r \geq ar'$ for some constant $a > 1$. However, as pointed out in Section~\ref{sec:prior}, \citeboth{braverman2016network} recently proved that if there is a graph where $r \geq (1+\eps)r'$ for a constant $\eps>0$, then there is an infinite family of graphs $\{G'\}$ where the gap is $\Omega((\lg |G'|)^c)$ for a constant $c>0$. Thus a too-good-to-be-true permutation algorithm will indeed give a strong counter example to Conjecture~\ref{conj:main}.

Our proof of Lemma~\ref{lem:coordinate} is highly non-trivial and is based on the elegant proof of the $\sqrt{IC}$ bound by \citeboth{Barak:compress} for compressing interactive communication under non-product distributions. Our main idea is to argue that for a uniform random bit string in $\{0,1\}^{nw}$ (corresponding to the concatenation $X = X_1 \circ \cdots \circ X_n$ of the $X_i$'s in the lemma), it must be the case that the expected Hamming distance to the nearest bit string $Y$ in $\Fam$ is $O(\sqrt{nwr})$. The coordinator thus finds $Y$ and transmits the XOR $X \oplus Y$ to the players. The XOR is sparse and thus the message can be made short by specifying only the non-zero entries. Proving that the expected distance to the nearest vector is $O(\sqrt{nwr})$ is the main technical difficulty and is the part that uses ideas from protocol compression.
\section{External Memory Lower Bounds}
As mentioned in the proof overview in Section~\ref{sec:overview}, we prove our lower bound for external memory sorting via a lower bound for the easier problem of permuting:
An input to the permutation problem is specified by a permutation $\pi$ of $\{1, 2, \dots, n\}$ as well as $n$ bit strings $d_1,\dots,d_n \in \{0,1\}^w$. We assume $w \geq \lg n$ such that all bit strings may be distinct. The input is given in the form of an array $A$ where the $i$'th entry $A[i]$ stores the tuple $(\pi(i), d_i)$. We assume the input is given in the following natural way: Each $\pi(i)$ is encoded as a $\lceil\log n \rceil$-bit integer and the $d_i$'s are given as they are - using $w$ bits for each.

The array $A$ is presented to an external memory algorithm as a sequence of blocks, where each block contains $\lfloor b/(w + \log n) \rfloor$ consecutive entries of $A$ (the blocks have $b = Bw$ bits). For simplicity, we henceforth assume $(w + \log n)$ divides $b$.

The algorithm is also given an initially empty output array $C$. The array $C$ is represented as a sequence of $n$ words of $w$ bits each, and these are packed into blocks containing $b/w$ words each. The goal is to store $d_{\pi^{-1}(i)}$ in $C[i]$. That is, the goal is to \emph{copy} the bit string $d_i$ from $A[i]$ to $C[\pi(i)]$. We say that an Algorithm $\mathcal{A}$ has an error of $\varepsilon$ for the permutation problem, if for every input to the problem, it produces the correct output with the probability at least $1-\varepsilon$.

The best known upper bounds for the permutation problem work also under the \emph{indivisibility} assumption. These algorithms solve the permutation problem in 
$$
O\left(\min\left\{n, \frac{nw}{b} \cdot \log_{m/b}(nw/b)\right\}\right) = O\left(\min\left\{n, \frac{n}{B} \cdot \log_{M/B}(n/B)\right\}\right) 
$$ I/Os \cite{aggarwal1988input}. Moreover, this can easily be shown to be optimal under the indivisibility assumption by using a counting argument~\cite{aggarwal1988input}. The $n$ bound is the bound obtained by running the naive ``internal memory'' algorithm that simply puts each element into its correct position one at a time. The other term is equivalent to the optimal comparison-based sorting bound (one thinks of $d_i$ as an integer in $[2^w]$ and concatenates $\pi(i) \circ d_i = \pi(i) \cdot 2^w + d_i$ and sorts the sequence). Thus any sorting algorithm that handles $(\lg n + w)$-bit keys immediately yields a permutation algorithm with the same number of I/Os. We thus prove lower bounds for the permutation problem and immediately obtain the sorting lower bounds as corollaries.

We thus set out to use Conjecture~\ref{conj:main} to provide a lower bound for the permutation problem in the external memory model. Throughout the proof, we assume that $nw/b = n/B$ is at least some large constant. This is safe to assume, as otherwise we only claim a trivial lower bound of $\Omega(1)$.

Let $\Alg$ be a randomized external memory algorithm for the permutation problem on $n$ integers of $w$ bits each. Assume $\Alg$ has error probability at most $1/3$ and let $b$ denote the disk block size in number of bits. Let $m$ denote the memory size measured in number of bits. Finally, let $t$ denote the expected number of I/Os made by $\Alg$ (on the worst input).

\paragraph{I/O-Graphs.}
For an input array $A$ representing a permutation $\pi$ and bit strings $d_1,\dots,d_n$, and an output array $C$, define the (random) I/O-graph $G$ of $\Alg$ as follows: Initialize $G$ to have one node per disk block in $A$ and one node per disk block in $C$. Also add one node to $G$ representing the initial memory of $\Alg$. We think of the nodes representing the disk blocks of $A$ and $C$ as \emph{block nodes} and the node representing the memory as a \emph{memory node} (see Figure \ref{fig:IO-a}). We will add more nodes and edges to $G$ by observing the execution of $\Alg$ on $A$. To simplify the description, we will call nodes of $G$ either \emph{dead} or \emph{live}. We will always have at most one live memory node. Initially all nodes are live. Moreover, we label the block nodes by consecutive integers starting at $0$. Thus the block nodes in the initial graph are labeled $1,2,\dots,n(w+\lg n)/b + nw/b$.

Now run the algorithm $\Alg$ on $A$. Whenever it makes an I/O, do as follows: If this is the first time the block is being accessed and it is not part of the input or output (a write operation to an untouched block), create a new live block node in $G$ and add a directed edge from the current live memory node to the new block node (see Figure \ref{fig:IO-e}). Label the new node by the next unused integer label. Otherwise, let $v$ be the live node in $G$ corresponding to the last time the disk block was accessed. 
We add a directed edge from $v$ to the live memory node, mark $v$ as dead, create a new live block node $v'$ and add a directed edge from the live memory node to $v'$. We give the new node the same label as $v$ (Figure \ref{fig:IO-b} and Figure \ref{fig:IO-c}). Finally, once for every $m/b$ I/Os, we mark the memory node as dead, create a new live memory node and add an directed edge from the old memory node to the new live memory node (Figure \ref{fig:IO-d}).

To better understand the definition of $G$, observe that all the nodes with the same label represent the different versions of a disk block that existed throughout the execution of the algorithm. Moreover, there is always exactly one live node with any fixed label, representing the current version of the disk block. Also observe that at the end of the execution, there must be a live disk block node in $G$ representing each of the output blocks in $C$, and these have the same labels as the original nodes representing the empty disk blocks of $C$ before the execution of $\Alg$.

\begin{figure}
\centering
\begin{subfigure} [b] {0.49\textwidth}
  \includegraphics[width=\textwidth]{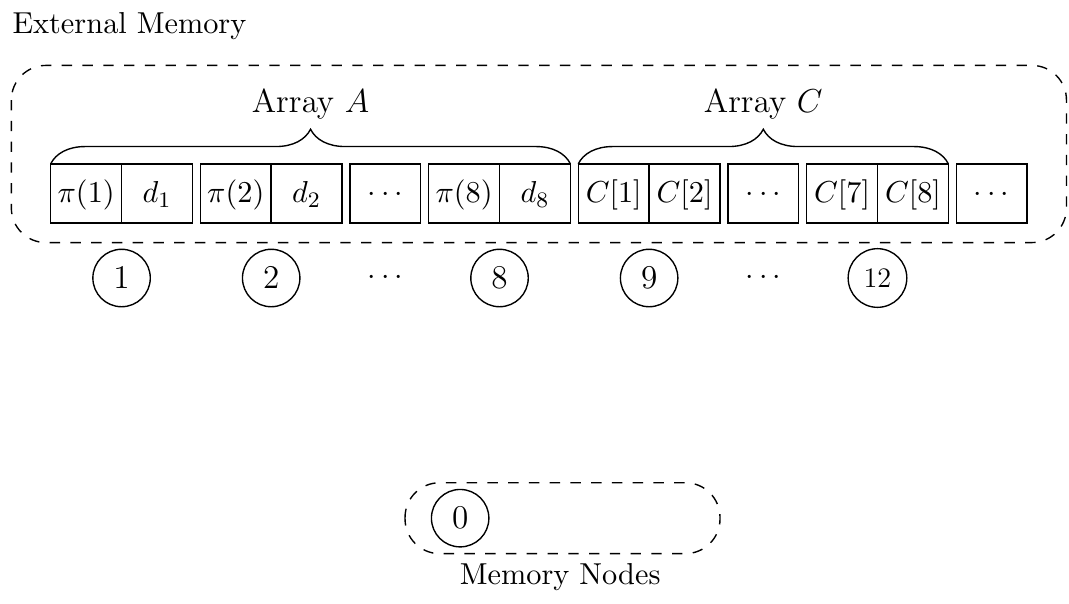}
  \vspace{-0.5cm}
  \caption{}
  \label{fig:IO-a}
 \end{subfigure}
 \begin{subfigure} [b] {0.49\textwidth}
  \includegraphics[width=\textwidth]{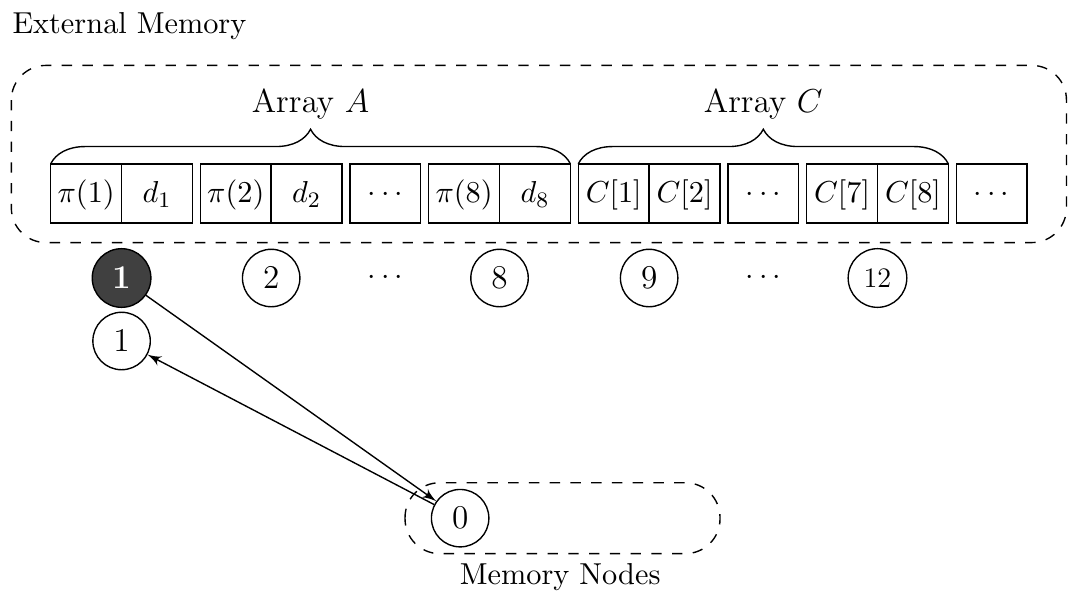}
  \vspace{-0.5cm}
  \caption{}
  \label{fig:IO-b}
 \end{subfigure}
  \begin{subfigure} [b] {0.49\textwidth}
  \includegraphics[width=\textwidth]{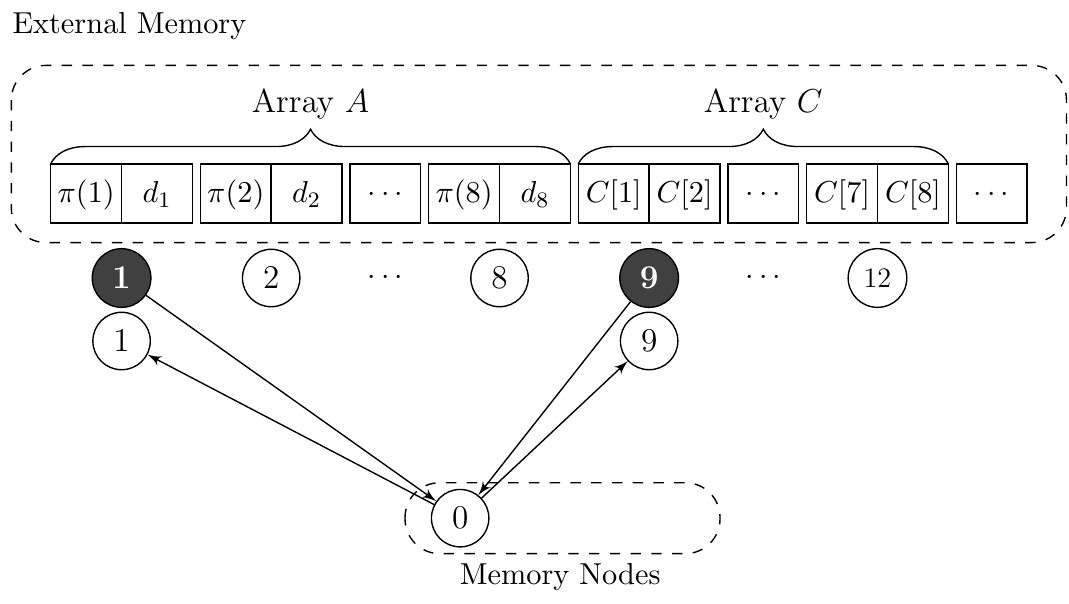}
  \vspace{-0.5cm}
  \caption{}
  \label{fig:IO-c}
 \end{subfigure}
  \begin{subfigure} [b] {0.49\textwidth}
  \includegraphics[width=\textwidth]{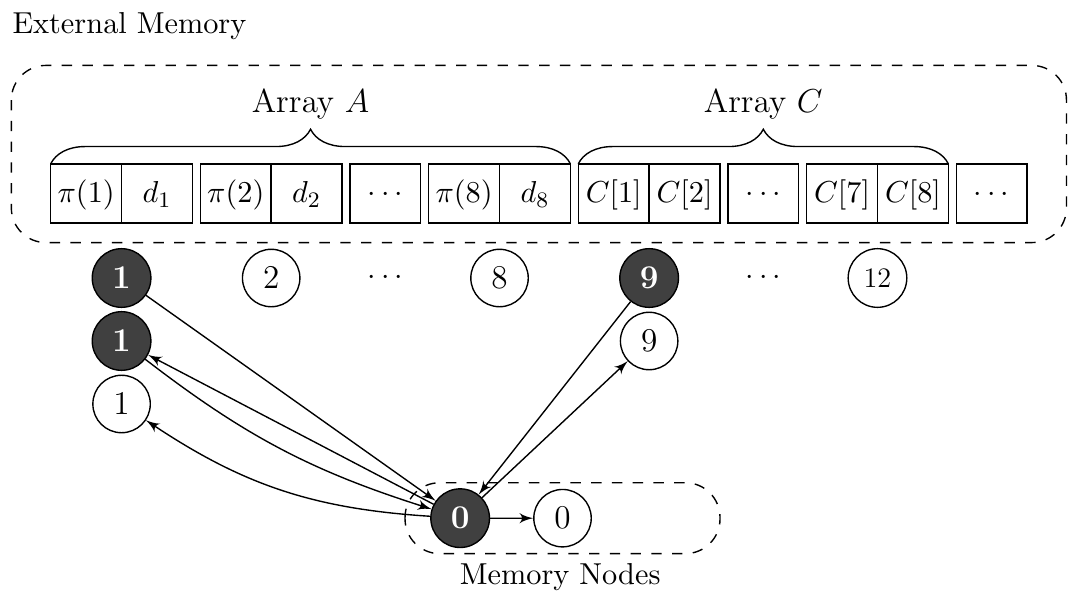}
  \vspace{-0.5cm}
  \caption{}
  \label{fig:IO-d}
 \end{subfigure}
  \begin{subfigure} [b] {0.6\textwidth}
  \includegraphics[width=\textwidth]{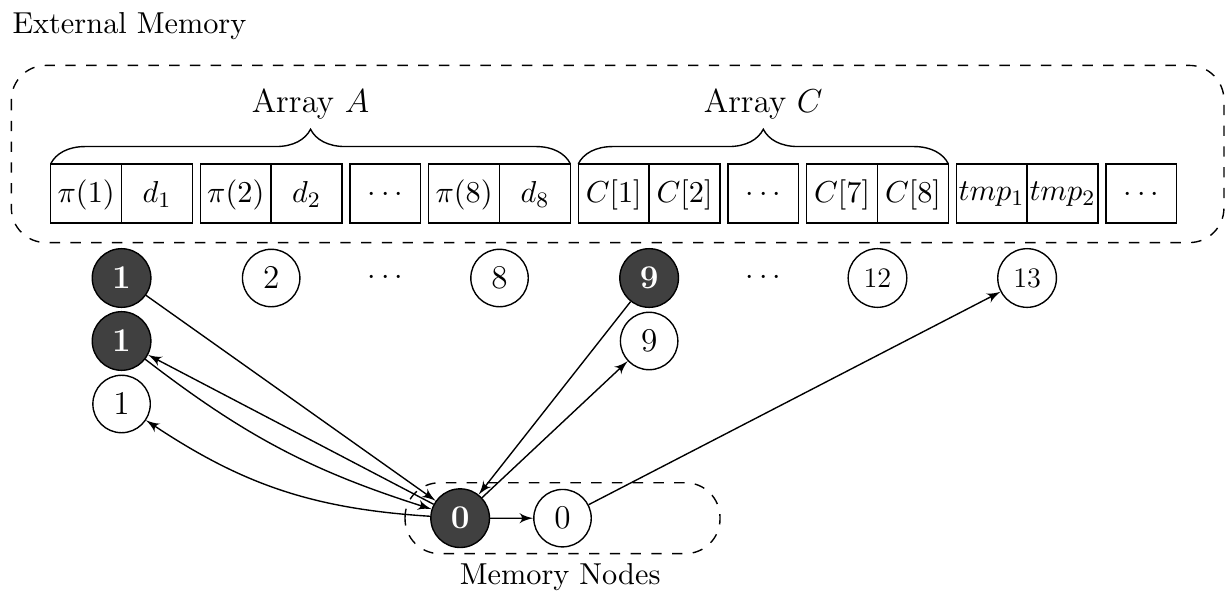}
  \vspace{-0.5cm}
  \caption{}
  \label{fig:IO-e}
 \end{subfigure}
 \caption{I/O-graph for an array $A$ consisting of $3$-bit strings $d_1, \cdots, d_8$ . In this example, each disk block contains two words of $w=3$ bits, i.e., $B=2$ (and $b=Bw=6$). Also, the main memory holds $M=6$ words ($m=18$). Figure (a) shows the initial I/O graph. For each disk block, we have initially one block node which is illustrated underneath them. Black nodes are dead, and white nodes are live. Figure (b) shows the updated I/O-graph after making an I/O to access the first disk block. Figure (c) is the I/O graph after accessing the block containing $C[1]$ and $C[2]$. Figure (d) shows the graph after making another I/O on the first disk block. Also, we create a new memory node after every $m/b=M/B=3$ I/Os and mark the old memory node as dead. Figure (e) shows the updated graph after accessing some block other than the input or output.
 }
 \label{fig:IO}
\end{figure}

\paragraph{Fixing the Randomness of $\Alg$.}
Consider the execution of $\Alg$ on an input $A$ representing a uniform random permutation $\pi$ as well as independent and uniform random bit strings $d_1,\dots,d_n \in \{0,1\}^w$. Since $\Alg$ makes an expected $t$ I/Os, it follows by Markov's inequality that $\Alg$ makes more than $6t$ I/Os with probability less than $1/6$. If we simply abort in such cases, we obtain an algorithm with worst case $O(t)$ I/Os and error probability at most $1/3 + 1/6 = 1/2$. Now fix the random choices of $\Alg$ to obtain a deterministic algorithm $\Alg^*$ with error probability $1/2$ over the random choice of $\pi$ and $d_1,\dots,d_n$. $\Alg^*$ makes $t^* = 6t$ I/Os in the worst case. Observe that for $\Alg^*$, we get a fixed I/O graph $G(A)$ for every input array $A$ since $\Alg^*$ is deterministic.

\paragraph{Finding a Popular I/O-Graph.}
We now find an I/O-graph $G$ which is the result of running $\Alg^*$ on a large number of different inputs. For notational convenience, let $t$ denote the worst case number of I/Os made by $\Alg^*$ (instead of using $t^*$ or $6t$). Observe that the total number of different I/O-graphs one can obtain as the result of running $\Alg^*$ is small:
\begin{lemma}
\label{lem:fewgraphs}
There are no more than
$$
(t+n(w+\lg n)/b + nw/b + 1)^{t+1}
$$ I/O graphs that may result from the execution of $\Alg^*$.
\end{lemma}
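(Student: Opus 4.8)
The plan is to bound the number of possible I/O-graphs by a straightforward encoding argument: since $\Alg^*$ is deterministic, the I/O-graph $G(A)$ is completely determined by the sequence of I/O operations $\Alg^*$ performs, so it suffices to count the number of distinct such sequences. First I would observe that $\Alg^*$ makes at most $t$ I/Os on every input. Each I/O is one of two kinds: a read or a write, and in either case it accesses exactly one disk block. A disk block is either one of the $n(w+\lg n)/b$ input blocks of $A$, one of the $nw/b$ output blocks of $C$, or an intermediate block that was created by some earlier write in this execution. Since there are at most $t$ write operations in total, there are at most $t$ intermediate blocks ever created, and at the time of the $j$'th I/O, the number of blocks that have already come into existence is at most $t + n(w+\lg n)/b + nw/b$. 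Hence each I/O can be described by choosing which of at most $N := t + n(w+\lg n)/b + nw/b$ blocks it touches, together with one extra bit (or bounded amount of information) distinguishing read from write — so each I/O has at most $N+1$ possibilities, say by folding the read/write flag into the count of ``what this I/O does.'' This yields at most $(N+1)^t$ sequences of $t$ I/Os.

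Next I would argue that this sequence, together with the number of I/Os actually made (which is some value in $\{0,1,\dots,t\}$, contributing another factor of at most $t+1$, or absorbed by allowing the sequence length to vary), uniquely determines $G(A)$: the construction of $G$ in the I/O-graph definition is a deterministic function of the I/O sequence alone. Indeed, walking through the construction, the placement of every node (new block node on an untouched write, replacement block node with inherited label on a subsequent access, new memory node after every $m/b$ I/Os) and every directed edge is dictated entirely by the order and targets of the I/Os; the data contents $d_i$ and the permutation $\pi$ never enter the definition of $G$. Therefore the map from I/O-sequences to I/O-graphs is well-defined, and the number of I/O-graphs is at most the number of I/O-sequences, which is at most $(t + n(w+\lg n)/b + nw/b + 1)^{t+1}$ once we account for the variable length of the sequence by the extra ``$+1$'' in base and exponent (e.g.\ by padding each sequence to length $t+1$ with a dummy ``halt'' symbol).

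The only genuinely delicate point — and the one I would be most careful about — is the bookkeeping on how many distinct block labels can be referenced by the $j$'th I/O, to make sure the base of the exponential is exactly $t + n(w+\lg n)/b + nw/b + 1$ and not something slightly larger (e.g.\ off by a constant from mishandling the read/write bit, or from whether a write to an already-existing block is counted separately). I would resolve this by noting that a write to an untouched intermediate block is the only event that introduces a fresh label, there are at most $t$ such events, the initial graph already has $n(w+\lg n)/b + nw/b$ labels, and the ``$+1$'' accommodates the choice of halting / the read-vs-write distinction; choosing the exponent $t+1$ rather than $t$ gives the slack needed to encode the length of the (possibly shorter) actual I/O sequence. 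Everything else is routine counting, so no further technical machinery is needed. $\Box$
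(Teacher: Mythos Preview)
Your approach is essentially identical to the paper's: count I/O-sequences, noting that each I/O has at most $t + n(w+\lg n)/b + nw/b + 1$ possible effects on the graph, and absorb the $t+1$ choices for sequence length into the exponent. The one place your exposition is slightly off is the justification of the ``$+1$'': the I/O-graph construction treats a read and a write to an \emph{existing} block identically (both add the same pair of edges and replacement node), so there is no read/write bit to encode---the ``$+1$'' is exactly the single extra option ``write to a fresh untouched block,'' which is how the paper accounts for it.
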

\begin{proof}
There are at most $t+1$ choices for the number of I/Os performed by $\Alg^*$ on an input ($0$ through $t$). Moreover, for each I/O performed, either one of at most $t+n(w+\lg n)/b + nw/b$ live disk blocks is read, or a new untouched disk block is created, resulting in at most $t+n(w+\lg n)/b  + nw/b+ 1$ different possible changes to $G$ for each I/O. Thus the total number of possible I/O-graphs is at most $(t+1)(t+n(w+\lg n)/b  + nw/b+ 1)^t \leq (t+n(w+\lg n)/b + nw/b + 1)^{t+1}$. 
\end{proof}

This means that we can find an I/O-graph, which correspond to the execution of $\Alg^*$ on many different inputs, and moreover, we can even assume that $\Alg^*$ is correct on many such inputs:
\begin{lemma}
\label{lem:popular}
There exists a set $\Gamma$ containing at least 
$
(n!2^{nw})/(2 (t+n(w+\lg n)/b  + nw/b+ 1)^{t+1})
$
different input arrays $A$, such that $\Alg^*$ is correct on all inputs $A \in \Gamma$ and the I/O-graph is the same for all $A \in \Gamma$.
\end{lemma}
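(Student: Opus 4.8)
The plan is a direct counting/pigeonhole argument that combines the two facts already in hand: Lemma~\ref{lem:fewgraphs}, which caps the number of distinct I/O-graphs, and the fact that $\Alg^*$ is a \emph{deterministic} algorithm with error probability at most $1/2$ over a uniformly random input. First I would note that the total number of distinct input arrays $A$ is exactly $n!\cdot 2^{nw}$, since an input is specified by a permutation $\pi$ ($n!$ choices) together with bit strings $d_1,\dots,d_n\in\{0,1\}^w$ ($2^{nw}$ choices), and distinct such pairs yield distinct arrays $A$.

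Next, recall that after fixing the randomness of $\Alg$ we obtained a deterministic $\Alg^*$ making at most $t$ I/Os in the worst case, with error probability at most $1/2$ when $\pi$ and the $d_i$'s are chosen uniformly at random. Hence the set $\mathcal{C}$ of inputs on which $\Alg^*$ outputs the correct array $C$ satisfies $|\mathcal{C}|\geq (n!\cdot 2^{nw})/2$.

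Now I would partition $\mathcal{C}$ into classes according to the induced I/O-graph. Because $\Alg^*$ is deterministic, each input $A$ determines a single well-defined I/O-graph $G(A)$, and by Lemma~\ref{lem:fewgraphs} there are at most $(t+n(w+\lg n)/b + nw/b + 1)^{t+1}$ possible values of $G(A)$. Applying the pigeonhole principle, some class $\Gamma\subseteq\mathcal{C}$ has size at least
$$
\frac{|\mathcal{C}|}{(t+n(w+\lg n)/b + nw/b + 1)^{t+1}} \;\geq\; \frac{n!\,2^{nw}}{2\,(t+n(w+\lg n)/b + nw/b + 1)^{t+1}}.
$$
By construction every $A\in\Gamma$ induces the same I/O-graph and $\Alg^*$ is correct on each such $A$, which is exactly the claim.

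There is no genuine obstacle in this step; the substantive work was already carried out in Lemma~\ref{lem:fewgraphs} and in the step fixing the randomness. The only points to keep in mind are that the I/O-graph really is a deterministic function of $A$ (this is precisely why we passed from $\Alg$ to $\Alg^*$), and that distinct inputs may legitimately share an I/O-graph even though their correct output arrays $C$ differ, since the graph records only which blocks are accessed and in what order, not their contents.
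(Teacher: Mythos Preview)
Your proposal is correct and follows essentially the same approach as the paper: count the inputs on which $\Alg^*$ is correct (at least $n!2^{nw}/2$), bound the number of possible I/O-graphs via Lemma~\ref{lem:fewgraphs}, and apply pigeonhole. The paper's proof is just a terser version of exactly this argument.
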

\begin{proof}
There are $n!2^{wn}$ different input arrays $A$ and $\Alg^*$ had error probability at most $1/2$ over a uniform random choice of such an input array. By Lemma~\ref{lem:fewgraphs}, we get that there must be some I/O-graph shared by at least $(n!2^{nw}/2)/ (t+n(w+\lg n)/b + nw/b + 1)^{t+1}$ of the inputs that $\Alg^*$ is correct on.
\end{proof}

\paragraph{Data Must Travel Far.}
The key idea in our lower bound proof, is to argue that there is a permutation for which most data bit strings $d_i$ are very far away from output entry $C[\pi(i)]$ in the corresponding I/O-graph. This would require the data to ``travel'' far. By Conjecture~\ref{conj:main}, this is impossible unless the I/O-graph is large. Thus we start by arguing that there is a fixed permutation where data has to travel far on the average, and where it also holds that there are many different data values that can be sent using the same I/O-graph. To make this formal, let $\dist(\pi,i,G)$ denote the distance between the block node in $G$ representing the input block storing $A[i]$ (the initial node, before any I/Os were performed) and the node in $G$ representing the output block storing $C[\pi(i)]$ in the \emph{undirected} version of $G$ (undirect all edges).

We prove the following:
\begin{lemma}
\label{lem:travel}
If $(t + n(w+\lg n)/b + nw/b+1)^{t+1} \leq (nw/b)^{(1/30)n}$, then there exists a permutation $\pi$, a collection of values $\Fam \subseteq \{\{0,1\}^w\}^n$ and an I/O-graph $G$ such that the following holds:
\begin{enumerate}
\item For all $(d_1,\dots,d_n) \in \Fam$ it holds that the algorithm $\Alg^*$ executed on the input array $A$ corresponding to inputs $\pi$ and $d_1,\dots,d_n$ results in the I/O-graph $G$ and $\Alg^*$ is correct on $A$.
\item $|\Fam| \geq \frac{2^{nw}}{4 (t+n(w+\lg n)/b  + nw/b+ 1)^{t+1}}$.
\item There are at least $(4/5)n$ indices $i \in \{1,\dots,n\}$ for which $\dist(\pi,i,G) \geq (1/2)\log_{2m/b} (nw/b)$.
\end{enumerate}
\end{lemma}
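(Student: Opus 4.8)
The plan is to establish the three conditions one at a time, using a counting/averaging argument for the first two and a ball-growth bound for the third, and then combining everything via a union bound. First I would apply Lemma~\ref{lem:popular} to obtain a single I/O-graph $G$ and a set $\Gamma$ of at least $(n!\,2^{nw})/(2(t+n(w+\lg n)/b + nw/b + 1)^{t+1})$ input arrays $A$, all sharing this graph $G$ and all correct under $\Alg^*$. Each array $A \in \Gamma$ decomposes as a permutation $\pi$ together with a tuple $(d_1,\dots,d_n)$. Since there are only $n!$ permutations, by averaging there is a permutation $\pi$ that appears in at least $|\Gamma|/n! \geq 2^{nw}/(2(t+n(w+\lg n)/b + nw/b + 1)^{t+1})$ of the arrays in $\Gamma$; let $\Fam_0 \subseteq \{\{0,1\}^w\}^n$ be the corresponding set of data tuples. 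This gives condition~1 (all tuples in $\Fam_0$ produce graph $G$ and $\Alg^*$ is correct on them) with a set of size at least $2^{nw}/(2(t+n(w+\lg n)/b + nw/b + 1)^{t+1})$, which is a factor $2$ better than what condition~2 demands; the slack will be spent below.

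Next I would handle condition~3, which concerns only $\pi$ and $G$, not the data tuples. The key structural fact established in the construction is that every node of $G$ has degree at most $2m/b$: each memory node participates in at most $m/b$ I/Os before being replaced (plus one edge to its successor), and each block node gets one in-edge and one out-edge per access, with accesses grouped so the amortized degree stays $O(m/b)$. Hence the number of nodes within undirected distance $\ell$ of any fixed node is at most $(2m/b)^{\ell+1}$, or so. Setting $\ell = (1/2)\log_{2m/b}(nw/b)$, the ball of radius $\ell$ around the input block node holding $A[i]$ contains at most roughly $(nw/b)^{1/2}\cdot(2m/b)$ nodes, which is $o(n/B) = o(\text{number of output blocks})$ once $nw/b$ is a large enough constant. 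Therefore, for a \emph{uniformly random} permutation $\pi$, the expected number of indices $i$ with $\dist(\pi,i,G) < \ell$ is at most $n \cdot (\text{ball size})/(\text{number of output blocks})$, which is at most, say, $n/100$. But here is the subtlety: $G$ is not independent of $\pi$ — $G$ was selected based on the data above. The clean way around this is to reverse the order of quantifiers: fix $G$ first, observe that over a uniformly random permutation $\pi$ the fraction of ``bad'' indices is small in expectation, so by Markov at least a $9/10$ fraction of permutations $\pi$ have at least $(4/5)n$ ``far'' indices; call this set of good permutations $P_G$. I expect this interplay — making the distance argument (which wants a random $\pi$) compatible with the popularity argument (which already fixed things based on the data) — to be the main obstacle, and the resolution is precisely the two-sided counting in the next step.

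Finally I would combine the two. For the fixed popular graph $G$, let $N(G)$ be the number of arrays in $\Gamma$ using graph $G$; by Lemma~\ref{lem:popular}, $N(G) \geq (n!\,2^{nw})/(2(t+n(w+\lg n)/b + nw/b + 1)^{t+1})$. Split $N(G)$ by permutation: at most $n! - |P_G| \leq n!/10$ permutations are ``bad'' for $G$, and each accounts for at most $2^{nw}$ arrays, so bad permutations contribute at most $(n!/10)\cdot 2^{nw}$ arrays. Provided $(t+n(w+\lg n)/b + nw/b+1)^{t+1} \leq (nw/b)^{(1/30)n}$ — which in particular forces this quantity to be $o(n!\,2^{nw})$, so the full bound on $N(G)$ exceeds $(n!/5)\cdot 2^{nw}$, say — the bad permutations cannot account for all of $\Gamma$'s arrays at $G$, so there must be a \emph{good} permutation $\pi \in P_G$ that appears among the arrays of $\Gamma$ at graph $G$ for at least $N(G)/n! \geq 2^{nw}/(2(t+n(w+\lg n)/b + nw/b+1)^{t+1})$ distinct data tuples. (More carefully: the good permutations together contribute at least $N(G) - (n!/10)2^{nw} \geq (n!/10)2^{nw}$ arrays, hence some single good $\pi$ contributes at least $2^{nw}/(10(t+n(w+\lg n)/b + nw/b+1)^{t+1})$ tuples; adjusting the constant $1/5$-vs-$4/5$ split in condition~3 and the fraction in the Markov step absorbs the constant loss, yielding the stated $1/4$ in condition~2.) Take $\Fam$ to be this set of tuples and $\pi$ this permutation: condition~1 holds because the tuples come from $\Gamma$ at graph $G$, condition~2 holds by the counting just done, and condition~3 holds because $\pi \in P_G$. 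Throughout, the hypothesis $(t+n(w+\lg n)/b + nw/b+1)^{t+1} \leq (nw/b)^{(1/30)n}$ is used exactly to guarantee both that $\Fam$ is large (condition~2 is non-vacuous) and that the popular-graph contribution dominates the bad-permutation contribution in the final counting step.
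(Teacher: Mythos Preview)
Your overall architecture matches the paper's---fix a popular I/O-graph $G$ via Lemma~\ref{lem:popular}, then argue that among the permutations represented in $\Gamma$ there must be one that is simultaneously ``popular'' (many data tuples) and ``good'' (satisfies the distance condition). The genuine gap is in how you bound the bad permutations and how you combine.

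Your Markov step yields only that at most a constant fraction (say $1/10$) of all permutations are bad, i.e.\ at most $n!/10$ bad permutations. But in the combination step you then need
\[
N(G) \;-\; \tfrac{n!}{10}\cdot 2^{nw} \;>\; 0,
\]
and you assert this holds because the hypothesis makes $N(G)$ exceed $(n!/5)2^{nw}$. That assertion is false: Lemma~\ref{lem:popular} only gives $N(G)\ge (n!\,2^{nw})/(2(t+n(w+\lg n)/b+nw/b+1)^{t+1})$, and the hypothesis allows the denominator to be as large as $(nw/b)^{n/30}$. So $N(G)$ can be exponentially smaller than $(n!/10)2^{nw}$, and your subtraction goes negative. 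A constant-fraction bound on bad permutations is simply too weak to beat the loss of a $(nw/b)^{n/30}$ factor from the popularity argument.

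The paper closes this gap by replacing the Markov bound with a direct counting argument showing that the number of bad permutations is at most $n!/(nw/b)^{n/20}$---exponentially small rather than a constant fraction. The count encodes a bad permutation by: the set $I$ of $n/5$ short-distance indices ($\binom{n}{n/5}$ choices), then $\pi(i)$ for $i\notin I$ ($\le n^{4n/5}$ choices), and for $i\in I$ the output block among the $\le (2m/b)^{(1/2)\lg_{2m/b}(nw/b)}=\sqrt{nw/b}$ nodes within the ball together with which of the $b/w$ entries in that block. Multiplying and simplifying with Stirling gives $\le n!/(nw/b)^{n/20}$. Separately, the paper lower-bounds the number $k$ of \emph{popular} permutations (those with $|\Gamma_\pi|\ge 2^{nw}/(4(\cdots)^{t+1})$) by $k\ge n!/(4(nw/b)^{n/30})$. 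Since $n!/(4(nw/b)^{n/30}) > n!/(nw/b)^{n/20}$, some popular permutation is good, and that permutation together with its $\Gamma_\pi$ gives the desired $\pi$ and~$\Fam$. The point is that both sides of the comparison are exponentially small in $n$, and the hypothesis is calibrated so that the popular count beats the bad count; your Markov bound lives at the wrong scale for this comparison to work.
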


\begin{proof}
We start by using Lemma~\ref{lem:popular} to obtain a set $\Gamma$ and an I/O-graph $G$ such that $$
|\Gamma| \geq \frac{n!2^{nw}}{2 (t+n(w+\lg n)/b  + nw/b+ 1)^{t+1}}.
$$
$\Gamma$ has the property that $\Alg^*$ is correct on all arrays $A \in \Gamma$ and $G$ is the I/O-graph corresponding to the execution of $\Alg^*$ on $A$ for every $A \in \Gamma$. For each permutation $\pi$, define $\Gamma_\pi$ to be the subset of arrays in $\Gamma$ for which the corresponding permutation is $\pi$. We have $|\Gamma_\pi| \leq 2^{nw}$ for all $\pi$. We now argue that there must be many $\Gamma_\pi$'s that are large: Let $k$ be the number of permutations $\pi$ such that
$$
|\Gamma_\pi| \geq \frac{2^{nw}}{4 (t+n(w+\lg n)/b  + nw/b+ 1)^{t+1}}.
$$
We must have 
\begin{eqnarray*}
k2^{nw} + (n!-k) \cdot \frac{2^{nw}}{4 (t+n(w+\lg n)/b  + nw/b+ 1)^{t+1}} &\geq& |\Gamma| \Rightarrow \\
k\left(2^{nw} -\frac{2^{nw}}{4 (t+n(w+\lg n)/b  + nw/b+ 1)^{t+1}}\right) &\geq& \frac{n!2^{nw}}{4 (t+n(w+\lg n)/b  + nw/b+ 1)^{t+1}} \Rightarrow \\
k2^{nw} &\geq&  \frac{n!2^{nw}}{4 (t+n(w+\lg n)/b  + nw/b+ 1)^{t+1}} \Rightarrow \\
k &\geq& \frac{n!}{4 (t+n(w+\lg n)/b  + nw/b+ 1)^{t+1}}
\end{eqnarray*}
If we assume $(t + n(w+\lg n)/b + nw/b+1)^{t+1} \leq (nw/b)^{(1/30)n}$ as in the statement of the lemma, then we get
$$
k \geq \frac{n!}{4(nw/b)^{(1/30)n}}.
$$
We have now argued that there are many permutations $\pi$, all having many arrays $A$ corresponding to $\pi$ with some data bit strings $d_1,\dots,d_n$, and where the the I/O-graph of $\Alg^*$ on $A$ is $G$. We will use this to conclude that for at least one of those permutations, it must be the case that $\dist(\pi,i,G)$ is large for many $i$. For this, observe that the number of distinct permutations $\pi$ for which there are less than $(4/5)n$ indices $i \in \{1,\dots,n\}$ with $\dist(\pi,i,G) \geq (1/2)\lg_{2m/b} (nw/b)$ is bounded by:
$$
\binom{n}{n/5} n^{(4/5)n} (2m/b)^{(n/5) \cdot (1/2)\lg_{2m/b} (nw/b)} (b/w)^{n/5}.
$$
To see this, observe that any such permutation $\pi$ can be uniquely specified by first specifying a set $I$ of $n/5$ indices $i$ with $\dist(\pi,i,G) < (1/2)\lg_{2m/b} (nw/b)$. There are $\binom{n}{n/5}$ possible choices for $I$. Then, for all indices $i$ with $i \notin I$, there are at most $n$ choices for $\pi(i)$. Finally, for indices $i \in I$ we argue as follows: Every node in $G$ has degree at most $2m/b$ by construction. Hence any node has at most $(2m/b)^\ell$ nodes within distance $\ell$ in the undirected version of $G$. Since $\dist(\pi,i,G) < (1/2)\lg_{2m/b} (nw/b)$ for all $i \in I$, it must be the case that the output node containing $C[\pi(i)]$ can be specified as one amongst $(2m/b)^{(1/2)\lg_{2m/b} (nw/b)}$ nodes. Finally, the output node containing $C[\pi(i)]$ represents exactly $b/w$ array entries and thus another $(b/w)$ factor specifies $\pi(i)$. Assuming $nw/b$ is at least some large constant, we can upper bound the above quantity using Stirling's approximation:
\begin{eqnarray*}
\binom{n}{n/5} n^{(4/5)n} (2m/b)^{(n/5) \cdot ((1/2)\lg_{2m/b}(wn/b)}(b/w)^{n/5} &\leq& \\
(5e)^{n/5} n^{(4/5)n} (nw/b)^{n/10} (b/w)^{n/5} &=& \\
(5e)^{n/5} n^n (nw/b)^{-n/10} &\leq& \\
\frac{n!e^n}{\sqrt{2 \pi n} n^n} (5e)^{n/5} n^n (nw/b)^{-n/10} &\leq& \\
\frac{n! e^n (5e)^{n/5}}{(nw/b)^{n/10}} &\leq& \\
\frac{n!}{(nw/b)^{n/20}}.
\end{eqnarray*}
In the last inequality, we assumed $nw/b \geq (5e)^4$ (a constant). We can safely assume this, as otherwise $nw/b = n/B = O(1)$ and the lower bound we claim is trivially true (an $\Omega(1)$ lower bound). The number $\frac{n!}{(nw/b)^{n/20}}$ is smaller than $k$ for $nw/b$ bigger than some constant, hence there must exist a permutation $\pi$ with $$
|\Gamma_\pi| \geq \frac{2^{nw}}{4 (t+n(w+\lg n)/b  + nw/b+ 1)^{t+1}}
$$ and where there are at least $(4/5)n$ indices with $\dist(\pi,i,G) \geq (1/2)\lg_{2m/b} (nw/b)$. Letting $\Fam$ consist of the bit strings $d_1,\dots,d_n$ corresponding to the arrays $A \in \Gamma_\pi$ completes the proof.
\end{proof}

\paragraph{Reduction to Network Coding.}
We are now ready to make our reduction to network coding. The basic idea in our proof is to use Lemma~\ref{lem:travel} to obtain an I/O-graph $G$ and permution $\pi$ with large distance between the node containing $A[i]$ and the node containing $C[\pi(i)]$ for many $i$. We will then create a source $s_i$ at the node representing $A[i]$ and a corresponding sink $t_i$ at the node corresponding to $C[\pi(i)]$. These nodes are far apart, but using the external memory permutation algorithm $\Alg^*$, there is an algorithm for transmitting $d_i$ from $s_i$ to $t_i$. Since the distance between $s_i$ and $t_i$ is at least $(1/2)\lg_{2m/b} (nw/b)$ for $(4/5)n$ of the pairs $(s_i,t_i)$, it follows from Conjecture~\ref{conj:main} that the sum of capacities in the network must be at least $\Omega(nw \lg_{2m/b}(nw/b))$ (we can transmit $w$ bits between each of the pairs). However, running the external memory algorithm results in a network/graph $G$ with only $O(t)$ edges, each needing to transmit only $b$ bits (corresponding to the contents of block on a read or write). Thus each edge needs only have capacity $b$ bits for the reduction to go through. Hence the sum of capacities in the network is $O(tb)$. This means that $t = \Omega((nw/b) \lg_{2m/b}(nw/b))$ as desired.

However, the reduction is not as straightforward as that. The problem is that Lemma~\ref{lem:travel} leaves us only with a subset $\Fam$ of all the possible values $d_1,\dots,d_n$ that one wants to transmit. For other values of $d_1,\dots,d_n$, we cannot use the algorithm $\Alg^*$ to transmit the data via the network/graph $G$. We could of course try to sample $(d_1,\dots,d_n)$ uniformly from $\Fam$ and then have a network coding solution only for such inputs. The problem is that for such a uniform $(d_1,\dots,d_n) \in \Fam$, it no longer holds that the inputs to the sources in the coding network are independent! Network coding rate only speaks of independent sources, hence we need a way to break this dependency. We do this by adding an extra node $u$ and some edges to the coding network. This extra node $u$ serves as a coordinator that takes the independent sources $X_1,\dots,X_n$ and replaces them with an input $(d_1,\dots,d_n) \in \Fam$ in such a way that running $\Alg^*$ on $(d_1,\dots,d_n)$ and using a little extra communication from $u$ allows the sinks to recover $X_{\pi^{-1}(i)}$ from $d_{\pi^{-1}(i)}$. We proceed the give the formal construction.

\begin{figure}
\centering
\begin{subfigure} [b] {0.7\textwidth}
  \includegraphics[width=\textwidth]{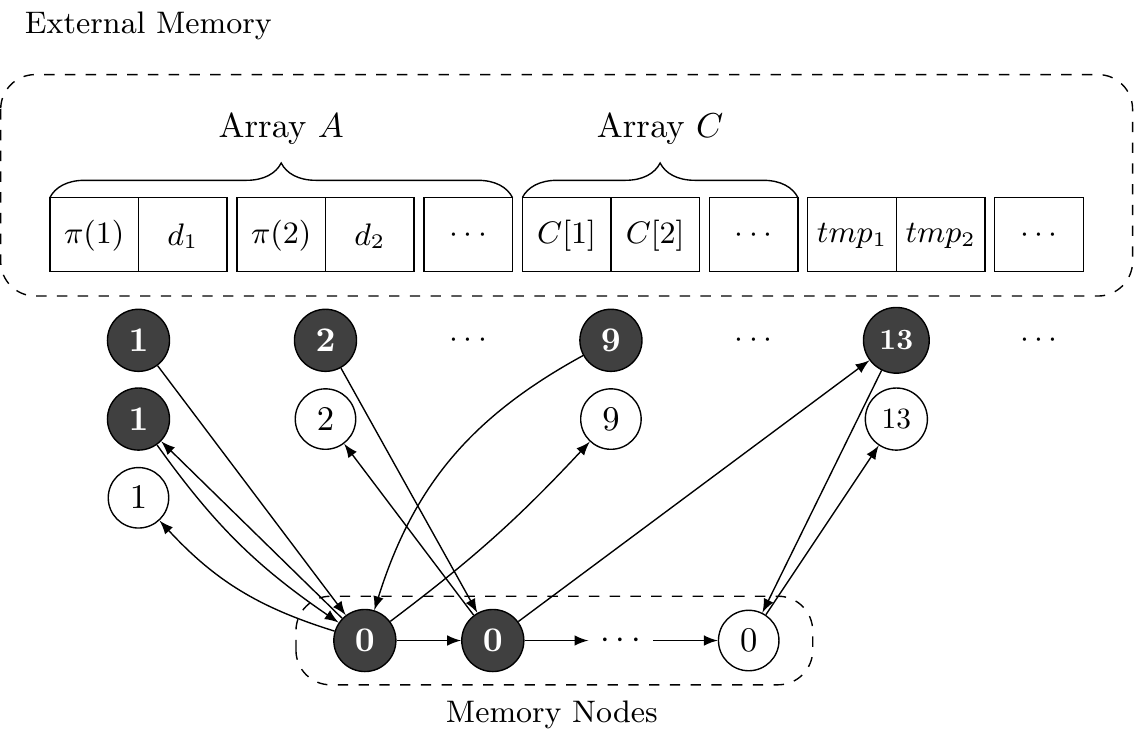}
  \vspace{-0.5cm}
  \caption{}
  \label{fig:IO-G}
 \end{subfigure}
 \begin{subfigure} [b] {0.7\textwidth}
  \includegraphics[width=\textwidth]{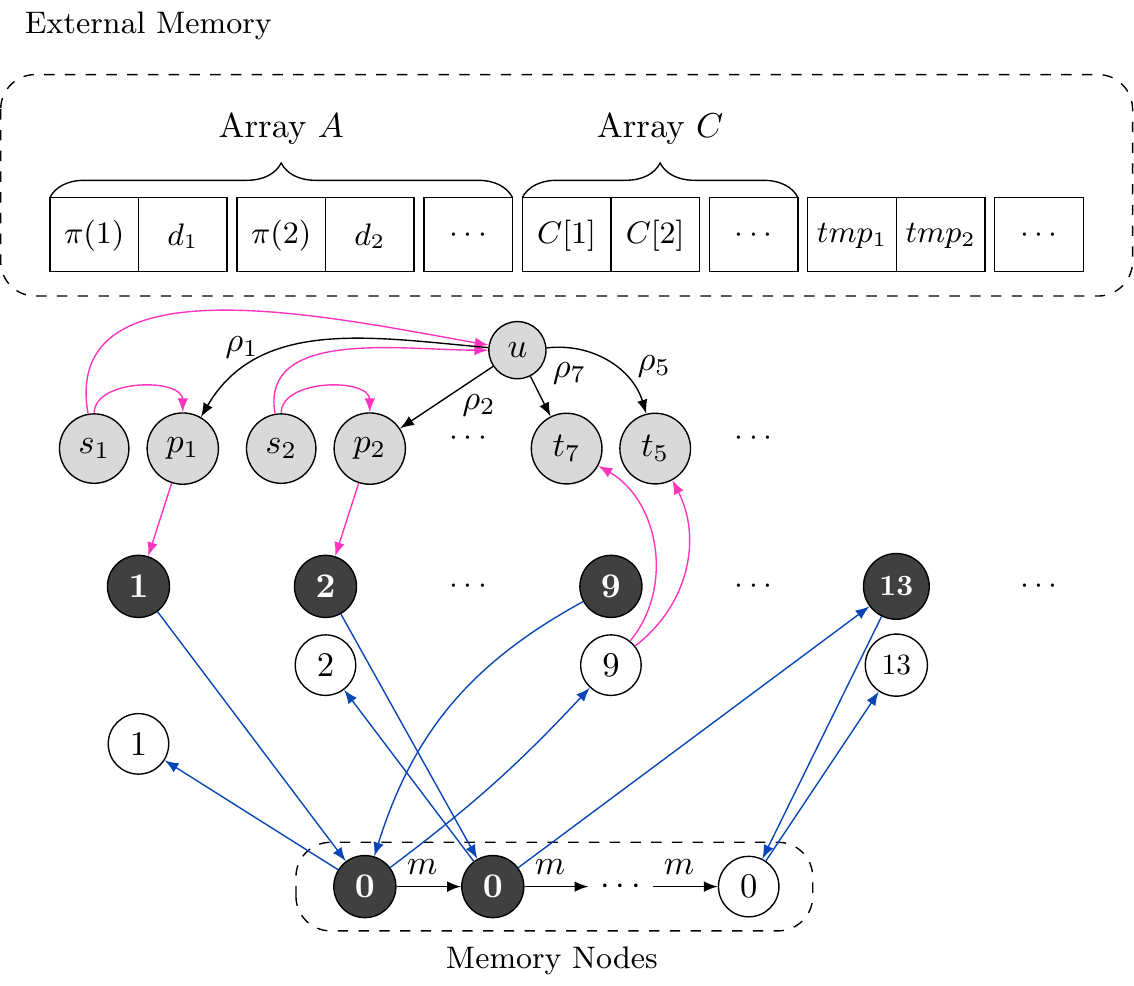}
  \vspace{-0.5cm}
  \caption{}
  \label{fig:IO-Gs}
 \end{subfigure}
 \caption{Construction of the coding network $G^*$ from the I/O graph $G$. Figure (a) shows the I/O graph $G$ for an array $A$ consisting of $3$-bit strings $d_1, \cdots, d_8$ with $w=3$ and $B=2$. Figure (b) shows the coding network $G^*$ derived from $G$. All \textcolor{figurePink}{pink} edges have the capacity of $w$ bits and all \textcolor{figureBlue}{blue} edges have the capacity of $b$ bits. Capacity of other edges are specified by their labels. In this example we assume that $\pi(7)=1$ and $\pi(5)=2$. Therefore, the output block containing $C[\pi(7)]$ and $C[\pi(5)]$ is the first output block which is the node $9$, and we added an edge with the capacity $w$ from the node $9$ to the sinks $t_7$ and $t_5$.
 }
 \label{fig:Gstar}
\end{figure}

Let $G$ be the I/O-graph, $\pi$ the permutation and $\Fam \subseteq \{\{0,1\}^w\}^n$ the values promised by Lemma~\ref{lem:travel}. From $G$, construct a coding network $G^*$ as follows (see Figure \ref{fig:Gstar}):
\begin{enumerate}
\item Add source and sink nodes $s_1,\dots,s_n$ and $t_1,\dots,t_n$ to $G^*$.
\item For each source $s_i$, add an additional node $p_i$.
\item Add all nodes of $G$ to $G^*$.
\item Add all edges of $G$ to $G^*$. Edges between a block node and a memory node has capacity $b$ bits. Edges between two memory nodes have capacity $m$ bits.
\item Remove all block nodes that have an incoming and outgoing edge to the same memory node (this makes the graph acyclic).
\item Add a directed edge with capacity $w$ bits from each source $s_i$ to $p_i$, and add a directed edge with capacity $w$ bits from each $p_i$ to the input block node containing $A[i]$.
\item Add an edge with capacity $w$ bits from the output block node containing $C[\pi(i)]$ to the sink $t_i$.
\item Add a special node $u$ to $G^*$. Add an edge of capacity $w$ bits from each source $s_i$ to $u$. Also add a directed edge from $u$ to each $p_i$ having capacity $\rho_i$ for parameters $\rho_i>0$ to be fixed later. Also add an edge from $u$ to sink $t_i$ with capacity $\rho_i$.
\end{enumerate}
We argue that for sufficiently large choices of $\rho_i$, one can use $\Alg^*$ to efficiently transmit $w$ bits of information between every source-sink pair $(s_i,t_i)$. Our protocol for this problem uses Lemma~\ref{lem:coordinate} from Section~\ref{sec:overview} as a subroutine. We defer the proof of Lemma~\ref{lem:coordinate} to Section~\ref{sec:coordinate} and proceed to show how we use it together with $\Alg^*$ to transmit data through the network $G^*$.

\paragraph{Transmitting Data.}
Let $X_1,\dots,X_n \in \{0,1\}^w$ be independent uniform random sources of data for which we need to transmit $X_i$ from $s_i$ to $t_i$ in $G^*$ for every $i$. Our protocol is as follows:
\begin{enumerate}
\item The sources $s_i$ send their inputs $X_i$ to $u$ via the directed edge from $s_i$ to $u$. They also send their input to $p_i$ via the directed edge from $s_i$ to $p_i$.
\item The coordinator node $u$ runs the protocol from Lemma~\ref{lem:coordinate} (the coordinator knows the $X_i$'s). Let $R_i$ be the message that the coordinator wishes to send to player $i$ based on Lemma~\ref{lem:coordinate}. The node $u$ sends $R_i$ to both $p_i$ and $t_i$. We fix the parameters $\rho_i$ such that $\rho_i = \E[|R_i|]$. From Lemma~\ref{lem:coordinate}, we know that if $|\Fam| \geq 2^{nw-o(nw)}$ then $\sum_i \rho_i = o(nw)$.
\item Each node $p_i$ now knows $X_i$ and a vector $\tau_i$ such that $(\tau_1 \oplus X_1) \circ \cdots \circ (\tau_n \oplus X_n) \in \Fam$. Node $p_i$ now computes $d_i = \tau_i \oplus X_i$ and sends $d_i$ to the node in $G^*$ representing the input block node containing $A[i]$. Each input block node in $G^*$ thus knows the contents of the corresponding block on the input array $A$ corresponding to $\pi$ with data $(d_1,\dots,d_n)$.
\item Since $A$ is an input array for which $\Alg^*$ results in the I/O-graph $G$, the network can now run the algorithm $\Alg^*$ as follows: The memory nodes will be simulating the algorithm $\Alg^*$ and the block nodes will simply serve as forwarding nodes that receive the contents that are written to the corresponding block by $\Alg^*$ and which sends it to the next memory node that reads the block. Ignoring the output block nodes for now, the network does as follows:
\begin{itemize}
\item The input block nodes forward their data as soon as they have received data from all the associated sources (they are connected to at most one memory node).
\item Internal block nodes (neither input or output) are connected to two (distinct) memory nodes. When they receive data from one, they forward it to the other.
\item The first memory node $v$ starts running $\Alg^*$, maintaining its $m$ bit memory state at all times. Whenever $\Alg^*$ accesses a disk block we do one of the following: If this is the first time the disk block is being accessed (a write to an untouched disk block), $v$ sends the contents to be written to the block to the corresponding block node in $G^*$. Otherwise, the contents of the accessed block has already been sent to $\Alg^*$ by the corresponding block node. The node $v$ now sends the new contents of the block to the block node in $G^*$ that was created due to the access (the contents may be the same if this was a read operation, or different if it was a write operation). When $\Alg^*$ has made $m/b$ I/Os, $v$ sends the memory state of $\Alg^*$ to the next memory node and becomes inactive. The next memory node now continues the simulation of $\Alg^*$ for another $m/b$ I/Os and so forth. As a technical detail, recall that we deleted block nodes where the two adjacent edges both go to the same memory node (to keep the graph acyclic). When a memory node wants to access such a block, it simply remembers itself what the contents would be. This is sufficient since no other memory node will access the block.
\end{itemize}
\item When the above terminates (with all memory nodes inactive), $\Alg^*$ has finished processing the array $A$ and we had $(d_1,\dots,d_n) \in \Fam$ (meaning that $\Alg^*$ is correct on $A$). Therefore, it now must be the case that the output block node containing the array entry $C[\pi(i)]$ knows the value $d_i$. Thus for all $i$, we let the output block node containing $C[\pi(i)]$ send $d_i$ to $t_i$.
\item Finally, each sink $t_i$ knows $d_i$ and the vector $\tau_i$ (from the data sent by $u$). Each $t_i$ now computes $\tau_i \oplus d_i$ and recovers $X_i$ as required (since $d_i = X_i \oplus \tau_i$).
\end{enumerate}
The above protocol is clearly a correct protocol for transmitting $X_1,\dots,X_n$ and it satisfies all capacity constraints of the network $G^*$. We have chosen to describe the protocol in the intuitive language above, but remark that it fits the more formal definition in Section~\ref{sec:prelim}, i.e. each message sent along an edge is a deterministic function of all incoming messages and there is a set of possible messages that can be sent on each edge.

\paragraph{Deriving the Lower Bound.}
We observe that the for all edges, except those with capacity $\rho_i$, the above protocol always sends a fixed number of bits. Thus messages on such edges are prefix-free. For the edges with capacity $\rho_i$, the protocol sends a message with expected length $\rho_i$. Since all messages on all edges are prefix-free, it follows from Shannon's Source Coding theorem that the expected length of each message is an upper bound on its entropy. Since the expected lengths are at most the capacity of the corresponding edges, we get by the definition of network coding rate from Section~\ref{sec:prelim}, that the above solution achieves a rate of $w$ bits. Hence from Conjecture~\ref{conj:main}, it follows that if we undirected $G^*$, then the multicommodity flow rate must be at least $w$ bits. From the definition of multicommodity flow rate in Section~\ref{sec:prelim}, we see that this implies that there is a (possibly fractional) way of sending $w$ units of flow between each source-sink pair. 

We first examine the amount of flow that can be transported between pairs $(s_i,t_i)$ along paths that visit $u$. We observe that any such flow must use at least two edges incident to $u$. But the sum of capacities of edges incident to $u$ is $nw + 2\sum_i \rho_i$. Hence the amount of flow that can be transmitted along paths using $u$ as an intermediate node is no more than $(nw + 2\sum_i \rho_i )/2 = nw/2 + \sum_i \rho_i$. If $|\Fam| \geq 2^{nw-o(nw)}$, then this is no more than $nw/2 + o(nw)$. From Lemma~\ref{lem:travel}, we know that there are at least $(4/5)n$ indices $i$ for which $\dist(\pi,i,G) \geq (1/2)\lg_{2m/b}(nw/b)$, provided that $(t + n(w+\lg n)/b + nw/b + 1)^{t+1} \leq (nw/b)^{(1/30)n}$. The total flow that must be sent between such pairs is $(4/5)nw$. This means that there is at least $(4/5)nw - nw/2 -o(nw) = \Omega(nw)$ flow that has to traverse $(1/2)\lg_{2m/b}(nw/b) = \Omega(\lg_{2m/b}(nw/b))$ edges of $G^*$ (the flow must use a path in the undirected version of $G$ since it cannot shortcut via $u$). Hence the sum of capacities corresponding to edges in $G$ must be $\Omega(nw \lg_{2m/b}(nw/b))$, assuming that $|\Fam| \geq 2^{nw-o(nw)}$. Every I/O made by $\Alg^*$ increases the capacity of the edges by $O(b)$ bits (two edges of $b$ bit capacity when a new block node is added to $G$, and an amortized $b$ bits capacity to pay for the $m$ bit edge between memory nodes after every $m/b$ I/Os). Thus if $\Alg^*$ makes at most $t$ I/Os, it must be the case that $tb = \Omega(nw  \lg_{2m/b}(nw/b))$ if $|\Fam| \geq 2^{nw-o(nw)}$. But $|\Fam| \geq 2^{nw}/4(t + n(w+\lg n)/b + nw/b + 1)^{t+1}$. Therefore, we must have either $t = \Omega((nw/b) \lg_{2m/b}(nw/b))$ or $t \lg(t n(w+\lg n)/b) = \Omega(nw)$. Finally, Lemma~\ref{lem:travel} also required $(t + n(w+\lg n)/b + nw/b + 1)^{t+1} \leq (nw/b)^{(1/30)n}$. Combining all of this means that either $t = \Omega((nw/b) \lg_{2m/b}(nw/b))$, or $t = \Omega(nw/\lg(nw))$ or $t = \Omega(n \lg(nw/b)/\lg(n \lg(nw/b))) = \Omega(n)$.

We have thus proved:

\begin{theorem}
Assuming Conjecture~\ref{conj:main}, any randomized algorithm for the external memory permutation problem, having error probability at most $1/3$, must make an expected 
$$
\Omega\Big(\min\Big\{n, \frac{nw}{\lg(nw)},\frac{n}{B} \cdot \lg_{2M/B} \frac{n}{B} \Big\}\Big)
$$
I/Os.
\end{theorem}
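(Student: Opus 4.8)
The plan is to package the execution of the algorithm into a network-coding solution and play it against Conjecture~\ref{conj:main}, in a win--win fashion: either the I/O count is already large for a counting reason, or the coding solution forces an undirected multicommodity flow whose cost can only be paid by many I/Os. First I would perform the reduction already described, fixing the randomness (Markov's inequality) to obtain a deterministic $\Alg^*$ with worst-case $t$ I/Os and error $1/2$ over a uniformly random input $(\pi,d_1,\dots,d_n)$; if $n/B = nw/b$ is bounded by a constant the claimed bound is $\Omega(1)$ and there is nothing to do, so I assume it exceeds a large constant. If the hypothesis $(t+n(w+\lg n)/b+nw/b+1)^{t+1} \le (nw/b)^{n/30}$ of Lemma~\ref{lem:travel} fails, then taking logarithms (assuming $t$ is polynomially bounded in $nw$, else we are done) gives $t\lg(tn(w+\lg n)/b) = \Omega(n\lg(nw/b))$, which together with the remaining branches contributes the $\Omega(n)$ term of the final min; so from now on I assume the hypothesis holds.

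Then Lemma~\ref{lem:travel} supplies $\pi$, $\Fam$ and an I/O-graph $G$ with the three listed properties, and I build the coding network $G^*$ by the eight-step construction: add $s_i \to p_i \to (\text{input block of }A[i])$ and $(\text{output block of }C[\pi(i)]) \to t_i$, import all of $G$ with $b$-bit block--memory edges and $m$-bit memory--memory edges (deleting the self-loop-like block nodes to stay acyclic), and attach a coordinator $u$ with $s_i\to u$ of capacity $w$ and $u\to p_i$, $u\to t_i$ of capacity $\rho_i := \E[|R_i|]$, where $R_i$ is the message from Lemma~\ref{lem:coordinate} applied to $\Fam$. Next I would check that the transmission protocol of the preceding section is a legal network-coding solution of rate $w$: the $s_i$ forward $X_i$ to $u$ and $p_i$; $u$ computes the prefix-free $R_i$ and returns them to $p_i$ and $t_i$; each $p_i$ forms $d_i := \tau_i \oplus X_i$, so $(d_1,\dots,d_n)\in\Fam$, and injects it, whereupon the memory nodes simulate $\Alg^*$ (correct on this input, since it induces the graph $G$) and deliver $d_i$ to the node holding $C[\pi(i)]$; finally $t_i$ outputs $X_i = \tau_i\oplus d_i$. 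Every edge carries a prefix-free message of expected length at most its capacity, so $H(A_e)\le c(e)$ by Shannon's source coding theorem, and $H(X_i)=w$; hence $G^*$ achieves coding rate $w$.

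Now I invoke Conjecture~\ref{conj:main}: undirected $G^*$ admits a multicommodity flow routing $w$ units between each $(s_i,t_i)$. The decisive step is the cut-counting at $u$: any unit routed through $u$ as an interior node traverses at least two edges incident to $u$, whose capacities sum to $nw + 2\sum_i\rho_i$, so at most $nw/2 + \sum_i\rho_i$ units can use $u$. By property~3 of Lemma~\ref{lem:travel}, $(4/5)nw$ of the $nw$ total units connect pairs at undirected distance $\ge \tfrac12\lg_{2m/b}(nw/b)$ inside $G$. Here I split once more: if $|\Fam| < 2^{(1-\delta)nw}$ for the small constant $\delta$ fixed below, then $(t+1)\lg(t+n(w+\lg n)/b+nw/b+1) = \Omega(nw)$, giving $t = \Omega(nw/\lg(nw))$; otherwise $r := nw - \lg|\Fam| \le \delta nw$, and Lemma~\ref{lem:coordinate} (its general $O(n+\sqrt{nwr}\lg(nw/r))$ bound, using $w=\omega(1)$ since $w\ge\lg n$) yields $\sum_i\rho_i \le \tfrac1{10}nw$ once $\delta$ is chosen small enough. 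Then at least $(4/5)nw - nw/2 - \tfrac1{10}nw = \Omega(nw)$ units are forced onto paths of length $\Omega(\lg_{2m/b}(nw/b))$ lying entirely in $G$ (they cannot shortcut through $u$), so $\sum_{e\in G} c(e) = \Omega(nw\lg_{2m/b}(nw/b))$; since each I/O of $\Alg^*$ raises this sum by an amortized $O(b)$ (two $b$-bit edges per new block node, plus an amortized $b$ bits toward the $m$-bit memory edge inserted every $m/b$ I/Os), we get $tb = \Omega(nw\lg_{2m/b}(nw/b))$, i.e.\ $t = \Omega((n/B)\lg_{2M/B}(n/B))$.

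Collecting the three branches, $t = \Omega(n)$, $t = \Omega(nw/\lg(nw))$ and $t = \Omega((n/B)\lg_{2M/B}(n/B))$, and taking the weakest, then recalling that $\Alg^*$'s $t$ is within a constant factor of the original algorithm's expected I/O count, gives the stated bound; the final step is routine simplification using $w\ge\lg n$, $nw/b=n/B$, $2m/b=2M/B$ and $n/B$ at least a large constant. I expect the real obstacle to be the bookkeeping that keeps $\sum_i\rho_i$ a small constant fraction of $nw$, and choosing the threshold $\delta$ consistently across the two subcases: this is precisely what leaves a positive constant fraction of the flow stranded on the long paths, and it is the one place where the argument needs the conjectured inequality $r\le r'$ rather than merely $r\le(1+\eps)r'$ --- a weakened bound that, by \citeboth{braverman2016network}, would itself imply an infinite family of counterexamples to Conjecture~\ref{conj:main}.
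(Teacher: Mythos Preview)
Your proposal is correct and follows essentially the same route as the paper: fix the randomness, branch on whether the hypothesis of Lemma~\ref{lem:travel} holds, build $G^*$ with the coordinator $u$, invoke Lemma~\ref{lem:coordinate}, and use Conjecture~\ref{conj:main} together with the degree-cut at $u$ to force $\Omega(nw)$ flow onto long paths. The only presentational difference is that you make the paper's informal ``$|\Fam|\ge 2^{nw-o(nw)}$'' condition concrete via an explicit threshold $\delta$, which is if anything cleaner; the three-branch case analysis and the final collection into the min are identical.
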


For $w = \Omega(\lg n)$, we may use the reduction to sorting and we immediately obtain Theorem~\ref{thm:main} as a corollary. We have restated it here for convenience:

\thmmain*

\subsection{Finding Vectors with a Coordinator}
\label{sec:coordinate}
In the following we prove Lemma~\ref{lem:coordinate}. The coordinator wishes to specify (prefix-free) bit strings $\tau_1,\dots,\tau_n \in \{0,1\}^w$ to the $n$ players such that $(\tau_1 \oplus X_1) \circ \cdots \circ (\tau_n \oplus X_n) \in \Fam$. The algorithm is straightforward: The coordinator searches for the bit string $y \in \Fam$ with the least Hamming distance to $X = X_1 \circ \cdots \circ X_n$. The coordinator then computes $\tau = y \oplus X$ and breaks $\tau$ into $w$-bit pieces $\tau_1,\dots,\tau_n$. The coordinator then sends $\tau_i$ to player $i$ as follows:
\begin{enumerate}
\item The coordinator computes the number of $1$'s in $\tau_i$. Let us denote this by $k_i$. The coordinator then sends a prefix-free encoding of $k_i$ using $O(\lg k_i)$ bits. This is done as follows: First send a unary encoding of $\lceil \lg k_i \rceil$ by sending $\lceil \lg k_i \rceil$ $0$'s, followed by a $1$. Then send $\lceil \lg k_i \rceil$ bits specifying $k_i$ in binary.
\item If $k_i \geq 1$, the coordinator now sends $\lceil \lg \binom{w}{k_i} \rceil$ bits specifying the positions of the $1$'s.
\end{enumerate}
The number of bits send to player $i$ is thus $O(\lg(k_i+2) + k_i \lg(w/k_i))$ and the messages are clearly prefix-free. Summing over all players, we get that the total amount of communication is:
\begin{eqnarray*}
O\left(\sum_{i=1}^n \lg(k_i+2) + k_i \lg(w/k_i) \right).
\end{eqnarray*}
Define $Y$ to be the random variable taking the value $k_i$ with probability $1/n$ for every $i$. Then the above equals:
\begin{eqnarray*}
O\left(n \cdot \E[\lg(Y + 2) + Y \lg(w/Y)]\right).
\end{eqnarray*}
Using that $\lg (x+2)$ and $x \lg(w/x)$ are concave functions, we get from Jensen's inequality that this is bounded by:
\begin{eqnarray*}
O\left(n \cdot \left(\lg(\E[Y]+2) + \E[Y] \lg(w/\E[Y])\right) \right).
\end{eqnarray*}
But $\E[Y] = (\sum_i k_i)/n$. Thus if we define $K = \sum_i k_i$, we get that the amount of communication is no more than
\begin{eqnarray*}
O\left(n\lg(K/n+2) + K\lg(nw/K)\right).
\end{eqnarray*}
Taking expectation and again using that $\lg(x/n+2)$ and $(x/n)\lg(nw/x)$ are concave functions in $x$, we get that the expected amount of communication is:
\begin{eqnarray*}
O\left(n\lg(\E[K]/n+2) + \E[K]\lg(nw/\E[K])\right).
\end{eqnarray*}
Thus what remains is to argue that $\E[K]$ is small. That is, we have to show that the Hamming distance between the uniform random $X$ and the nearest $y$ in $\Fam$ is small in expectation.

We prove this by considering a concrete distribution over pairs $(Z,Y)$ where $Z$ is uniform random in $\{0,1\}^{nw}$, $Y$ is uniform random in $\Fam$ and the expected Hamming distance between $Z$ and $Y$ is small. This is of course requires that the joint distribution of $(Z,Y)$ is far from a product distribution. If we use $T$ denote the expected Hamming distance between $Z$ and $Y$, then it must be the case that $\E[K] \leq T$ since $X$ and $Z$ have the same distribution and $K$ gives the distance to the nearest vector in $\Fam$ (not just to the random vector $Z$).

We now argue that $T$ is small for the right joint distribution on $Z$ and $Y$. Our choice of joint distribution and our proof is inspired by the $\sqrt{IC}$ protocol compression algorithm for non-product distributions given in the seminal work of \citeboth{Barak:compress}.

Pick uniform random numbers $\kappa_1,\dots,\kappa_{nw}$ between $0$ and $1$. For each $i$, we let the $i$'th bit of $Z$, denoted $Z_i$ equal $1$ if $\kappa_i \leq 1/2$ and $0$ otherwise. Thus $Z$ is uniform random in $\{0,1\}^{nw}$. For $Y$, we choose each bit $Y_i$ one at a time. Assume we have already chosen values $y_1,\dots,y_{i-1}$ for the preceeding bits and let $D$ be uniform random in $\Fam$. We choose $Y_i$ to be $1$ if $\kappa_i \leq \Pr[D_i = 1 \mid D_{i-1}=y_{i-1}, \dots,D_1 = y_1]$ and $0$ otherwise. Hence $Y$ is uniform random in $\Fam$. Thus, if we have already chosen values $y_1,\dots,y_{i-1}$, then the probability that $Z_i$ and $Y_i$ are distinct is $|1/2 - \Pr[Y_i = 1 \mid Y_{i-1}=y_{i-1}, \dots,Y_1 = y_1]|$. For ease of notation, use $Y_{<i}$ to denote $Y_{i-1},\dots,Y_1$ and $y_{<i}$ to denote $y_{i-1},\dots,y_1$, i.e. we have $\Pr[Z_i \neq Y_i \mid Y_{<i} = y_{<i}] = |1/2 - \Pr[Y_i = 1 \mid Y_{<i}=y_{<i}]|$.

Let $U$ denote the uniform distribution on $1$ bit and let $P$ denote the distribution of $Y_i$ conditioned on $Y_{<i}=y_{<i}$. The probability that $Z_i \neq Y_i$ is thus equal to $\|U - P\|_1$. Using Pinsker's inequality, this means that 
$$
D_{KL}(U || P) \geq 2 \|U-P\|_1^2 = 2\left(\Pr[Z_i \neq Y_i \mid Y_{<i}=y_{<i}]\right)^2.
$$
But the KL-divergence from the uniform random distribution over a bit simply equals $1-H(Y_{i} \mid Y_{<i}=y_{<i})$, that is:
\begin{eqnarray*}
1-H(Y_{i} \mid Y_{<i}=y_{<i}) &\geq& 2\left(\Pr[Z_i \neq Y_i \mid Y_{<i} = y_{<i}]\right)^2 
\end{eqnarray*}
We can now bound $\E[K]$ as follows:
\begin{eqnarray*}
\E[K] &=& \sum_{i=1}^{nw} \sum_{y_{<i}} \Pr[Y_{<i} = y_{<i}] \cdot \Pr[Z_i \neq Y_i \mid Y_{<i}=y_{<i}]\\
&\leq& \sum_{i=1}^{nw} \sum_{y_{<i}} \Pr[Y_{<i}=y_{<i}] \cdot \sqrt{(1-H(Y_{i} \mid Y_{<i}=y_{<i}))/2} \\
\end{eqnarray*}
Using Cauchy-Schwartz, we further conclude that:
\begin{eqnarray*}
\E[K] &\leq& \sum_{i=1}^{nw} \sum_{y_{<i}} \sqrt{\Pr[Y_{<i}=y_{<i}]} \cdot \sqrt{\Pr[Y_{<i}=y_{<i}] (1-H(Y_{i} \mid Y_{<i}=y_{<i}))/2} \\
&\leq& \sqrt{\left(\sum_{i=1}^{nw} \sum_{y_{<i}} \Pr[Y_{<i}=y_{<i}] \right) \cdot \left(\sum_{i=1}^{nw} \sum_{y_{<i}} \Pr[Y_{<i}=y_{<i}] (1-H(Y_{i} \mid Y_{<i}=y_{<i}))/2\right)} \\
&=& \sqrt{(nw) \cdot \left(\sum_{i=1}^{nw} \sum_{y_{<i}} \Pr[Y_{<i}=y_{<i}] (1-H(Y_{i} \mid Y_{<i}=y_{<i}))/2\right)}\\
&=& \sqrt{(nw) \cdot \left((nw - H(Y))/2\right)}
\end{eqnarray*}
But $Y$ was uniform random in $\Fam$, i.e. $H(Y) = \lg|\Fam| \geq nw-r$ so we get:
$$
\E[K] = O(\sqrt{nw r}).
$$
The total expected communication therefore becomes:
$$
O\left(n\lg(\sqrt{rw/n}+2) + \sqrt{nwr}\lg(\sqrt{nw/r})\right) = O\left(n\lg(\sqrt{rw/n}+2) + \sqrt{nwr}\lg(nw/r)\right).
$$
Observe now that $\sqrt{rw/n} \geq 2 \Rightarrow r \geq n/w$. For such $r$, it also holds that $\sqrt{nwr} \geq n$. Moreover, $\lg(nw/r) = \Omega(\lg(\sqrt{rw/n}))$ for all choices of $r \leq nw$ (which is the maximum possible $r$). Therefore, the whole expression simplifies to:
$$
O\left(n+ \sqrt{nwr}\lg(nw/r)\right).
$$

\subsection{Oblivious Sorting and Permuting}
In this section we prove Theorem~\ref{thm:RAM}, i.e. that there is an $\Omega(n \lg n)$ lower bound for oblivious word-RAM sorting algorithms when the integers and word size are $\Theta(\lg n)$ bits. In fact, we prove something slightly stronger, namely an I/O lower bound of $\Omega((n/B) \lg_{2M/B}(n/B))$ for oblivious permuting (with error probability $1/3$). Theorem~\ref{thm:RAM} follows by setting $B=\Theta(1)$ and $M=\Theta(B)=\Theta(1)$.

Observe that for an oblivious algorithm $\Alg$, the memory access pattern is always the same. This means that the I/O-graph $G$ corresponding to an execution of $\Alg$ is the same for all inputs. We can now fix the randomness of $\Alg$ to obtain a deterministic algorithm $\Alg^*$ that is correct on at least a $(2/3)$-fraction of all possible inputs. Re-executing the argument in the proof of Lemma~\ref{lem:travel}, we get:

\begin{lemma}
\label{lem:far}
There exists a permutation $\pi$ and a collection of values $\Fam \subseteq \{\{0,1\}^w\}^n$ such that in the I/O-graph $G$ corresponding to $\Alg$'s execution, the following holds:
\begin{enumerate}
\item For all $(d_1,\dots,d_n) \in \Fam$ it holds that the algorithm $\Alg$ executed on the input array $A$ corresponding to inputs $\pi$ and $d_1,\dots,d_n$ results in the I/O-graph $G$ and $\Alg^*$ is correct on $A$.
\item $|\Fam| \geq 2^{nw-1}$.
\item There are at least $(4/5)n$ indices $i \in \{1,\dots,n\}$ for which $\dist(\pi,i,G) \geq (1/2)\lg_{2m/b}(nw/b)$. 
\end{enumerate}
\end{lemma}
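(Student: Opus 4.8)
The plan is to re-run the argument behind Lemma~\ref{lem:travel}, exploiting the fact that obliviousness makes the I/O-graph independent of the input. In particular, there is a \emph{single} I/O-graph $G$, shared by every input array, so Lemma~\ref{lem:fewgraphs} and Lemma~\ref{lem:popular} are not needed, and the Markov/abort step is also unnecessary because the number of I/Os of an oblivious algorithm is fixed. I would first fix the random coins of $\Alg$ to minimize its error, obtaining a deterministic $\Alg^*$ that produces the I/O-graph $G$ on every input and is correct on a set $\Gamma$ of at least $(2/3) n! 2^{nw}$ of the $n! 2^{nw}$ possible input arrays.

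Next I would partition $\Gamma$ by the underlying permutation: for each $\pi$ let $\Gamma_\pi \subseteq \Gamma$ be the correct inputs whose permutation is $\pi$, so $|\Gamma_\pi| \le 2^{nw}$. If $k$ is the number of permutations with $|\Gamma_\pi| \ge 2^{nw-1}$, then $k \cdot 2^{nw} + (n!-k)\cdot 2^{nw-1} \ge |\Gamma| \ge (2/3) n! 2^{nw}$, which rearranges to $k \ge n!/3$. Thus at least a third of all permutations have $|\Gamma_\pi| \ge 2^{nw-1}$. I would then reuse verbatim the distance-counting estimate from the proof of Lemma~\ref{lem:travel}: since every node of $G$ has degree at most $2m/b$ (the construction of $G$ is unchanged), the number of permutations lacking $(4/5)n$ indices $i$ with $\dist(\pi,i,G)\ge (1/2)\lg_{2m/b}(nw/b)$ is at most $\binom{n}{n/5} n^{(4/5)n} (2m/b)^{(n/5)(1/2)\lg_{2m/b}(nw/b)} (b/w)^{n/5} \le n!/(nw/b)^{n/20}$. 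Assuming $nw/b$ exceeds an absolute constant (otherwise the claimed bound is trivial), this is strictly less than $n!/3$, so some $\pi$ enjoys both $|\Gamma_\pi|\ge 2^{nw-1}$ and the distance property; taking $\Fam$ to be the set of data tuples $(d_1,\dots,d_n)$ appearing in $\Gamma_\pi$ yields all three conclusions, with item~1 being automatic by obliviousness.

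I do not expect a genuine new obstacle here, since the heavy lifting was already done in Lemma~\ref{lem:travel}. The two points that need checking are that (i) conditioning on $\Alg^*$'s correctness still leaves enough permutations with a large $\Gamma_\pi$ — which holds because the constant-factor loss from error probability $1/3$ is dwarfed by the $(nw/b)^{-n/20}$ slack in the distance count — and (ii) the degree bound and hence the $(2m/b)^\ell$ ball-size estimate transfer unchanged, which they do because the I/O-graph construction and the insertion of a fresh memory node after every $m/b$ I/Os are exactly as before. The only quantitative improvement over Lemma~\ref{lem:travel} is that $|\Fam|$ is now as large as $2^{nw-1}$ rather than $2^{nw}/(4(\dots)^{t+1})$, precisely because there is just one I/O-graph to account for.
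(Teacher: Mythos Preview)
Your proposal is correct and matches the paper's own proof essentially line for line: the paper also fixes the randomness, applies Markov's inequality (equivalently, your averaging argument) to get at least $n!/3$ permutations with $|\Gamma_\pi|\ge 2^{nw-1}$, and then invokes the $n!/(nw/b)^{n/20}$ distance-count bound from the proof of Lemma~\ref{lem:travel} to find a $\pi$ satisfying both conditions. Your observation that obliviousness makes the I/O-graph unique (so Lemmas~\ref{lem:fewgraphs} and~\ref{lem:popular} and the abort step are unnecessary) is exactly the simplification the paper exploits.
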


\begin{proof}
We only sketch the proof as it follows the proof of Lemma~\ref{lem:travel} uneventfully. First observe that by Markov's inequality, there are at least $n!/3$ permutations $\pi'$ for which $\Alg^*$ errs on at most $(1/3)(3/2)2^{nw} = 2^{nw-1}$ of the input arrays corresponding to $\pi'$ and a set of bit strings $(d_1,\dots,d_n)$. In the proof of Lemma~\ref{lem:travel}, we saw that there are no more than $n!/(nw/b)^{n/20}$ permutations with more than $n/5$ indices $i$ such that $\dist(\pi,i,G) < (1/2)\lg_{2m/b}(nw/b)$. The lemma follows immediately.
\end{proof}

Re-executing the proof of Theorem~\ref{thm:main} using Lemma~\ref{lem:far} instead of Lemma~\ref{lem:travel}, we see that the constraint $|\Fam| \geq 2^{nw-o(nw)}$ is trivially satisfied. Moreover, Lemma~\ref{lem:far} has no constraints on $t$ like there was in Lemma~\ref{lem:travel}. Thus the only constraint we get is $t = \Omega((nw/b)\lg_{2m/b}(nw/b)) = \Omega(n/B \lg_{2M/B}(n/B))$ as claimed. We have thus shown:
\begin{theorem}
\label{thm:RAMGeneral}
Assuming Conjecture~\ref{conj:main}, any oblivious randomized algorithm for the external memory permutation problem, having error probability at most $1/3$, must make an expected 
$$
\Omega\Big(\frac{n}{B} \cdot \lg_{2M/B} \frac{n}{B} \Big)
$$
I/Os.
\end{theorem}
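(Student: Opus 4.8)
The plan is to re-run the network-coding reduction from the proof of Theorem~\ref{thm:main}, but to exploit obliviousness to remove the one place where that argument needed the I/O count to be sublinear. First I would observe that if $\Alg$ is oblivious then its sequence of block accesses, hence the I/O-graph $G$, is identical on all inputs once the random coins are fixed. Consequently there is no longer any need for the graph-counting step (Lemmas~\ref{lem:fewgraphs} and~\ref{lem:popular}): a single graph $G$ works for every input. Fixing the coins to a setting that beats the average input-error probability turns $\Alg$ into a deterministic $\Alg^*$ that is correct on at least a $2/3$ fraction of all $n!\,2^{nw}$ inputs while still using the same access pattern; a routine Markov-over-the-coins step handles the case where the raw I/O count is itself randomized.

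Next I would reprove the structural statement (Lemma~\ref{lem:far}) that replaces Lemma~\ref{lem:travel}. Averaging the $\le 1/3$ input-error of $\Alg^*$ over permutations and applying Markov's inequality yields at least $n!/3$ permutations $\pi$ on which $\Alg^*$ errs on at most $2^{nw-1}$ of the associated data tuples; for such a $\pi$ the family $\Fam$ of good data tuples has $|\Fam| \ge 2^{nw-1}$. On the other hand, the degree-$2m/b$ bound on $G$ together with the Stirling estimate already carried out in the proof of Lemma~\ref{lem:travel} shows that fewer than $n!/(nw/b)^{n/20} < n!/3$ permutations fail to have $\ge (4/5)n$ indices $i$ with $\dist(\pi,i,G) \ge (1/2)\lg_{2m/b}(nw/b)$. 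Hence some permutation enjoys both properties at once, which is Lemma~\ref{lem:far}. The crucial point, and the reason the final bound has no side condition on $t$, is that $|\Fam| \ge 2^{nw-1}$ holds \emph{unconditionally} here, whereas in Lemma~\ref{lem:travel} the lower bound on $|\Fam|$ degraded with the number of distinct I/O-graphs, i.e. with $t$, forcing the hypothesis $(t+\cdots)^{t+1} \le (nw/b)^{n/30}$.

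Then I would run the reduction to network coding verbatim: build the coding network $G^*$ by attaching sources $s_i$, forwarding nodes $p_i$, sinks $t_i$ and the coordinator $u$ to $G$, set $\rho_i = \E[|R_i|]$, and transmit $w$ uniform independent bits per pair, using Lemma~\ref{lem:coordinate} to rewrite the independent inputs $X_i$ into a tuple $(d_1,\dots,d_n) \in \Fam$ on which $\Alg^*$ is correct. Since $|\Fam| \ge 2^{nw-1} = 2^{nw-o(nw)}$ with $r=1=o(nw)$ and $w = \omega(1)$ (recall $w \ge \lg n$), Lemma~\ref{lem:coordinate} gives $\sum_i \rho_i = o(nw)$, so $u$ routes at most $nw/2 + o(nw)$ flow; the remaining $\Omega(nw)$ flow between the $(4/5)n$ far pairs must traverse undirected paths of length $\Omega(\lg_{2m/b}(nw/b))$ inside $G$, forcing the total edge capacity of $G$ to be $\Omega(nw\,\lg_{2m/b}(nw/b))$. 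Charging the $O(b)$ amortized capacity added per I/O then yields $t = \Omega((nw/b)\lg_{2m/b}(nw/b)) = \Omega((n/B)\lg_{2M/B}(n/B))$, and Theorem~\ref{thm:RAM} follows by taking $B, M = \Theta(1)$.

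The reduction itself is not the hard part, as it is lifted unchanged from Theorem~\ref{thm:main} and both Lemma~\ref{lem:coordinate} and the counting inside Lemma~\ref{lem:travel} are already available. The one place that genuinely needs care is verifying that obliviousness simultaneously buys a single fixed $G$ \emph{and} a family $\Fam$ of size $2^{nw-o(nw)}$ with no dependence on $t$; once that is in hand, the $t = \Omega(nw/\lg(nw))$ and $t = \Omega(n)$ branches that appeared in Theorem~\ref{thm:main} (which came precisely from the graph-counting loss) disappear, leaving the clean $\Omega((n/B)\lg_{2M/B}(n/B))$ bound, which for $B = O(1)$ can even exceed the $O(n)$ Radix Sort upper bound.
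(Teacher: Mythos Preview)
Your proposal is correct and follows essentially the same approach as the paper: exploit obliviousness to get a single fixed I/O-graph $G$, use Markov over permutations together with the Stirling count from Lemma~\ref{lem:travel} to find a $\pi$ with both $|\Fam|\ge 2^{nw-O(1)}$ and many far pairs, and then rerun the network-coding reduction verbatim so that the $t$-dependent side conditions from Lemma~\ref{lem:travel} vanish. The only minor wrinkle is that your ``routine Markov-over-the-coins'' truncation at $O(t)$ I/Os bumps the input-error of $\Alg^*$ from $1/3$ to (say) $1/2$, so the constants in $n!/3$ and $2^{nw-1}$ shift slightly; this is harmless since any $|\Fam|\ge 2^{nw-O(1)}$ suffices for Lemma~\ref{lem:coordinate}.
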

We note that Theorem~\ref{thm:RAM} follows as an immediate corollary by setting $B=\Theta(1)$ and $M = \Theta(B)$.

\subsection{Matrix Transpose}
In this subsection, we reprove the lower bound of Adler et al.~\cite{Adler:soda} for external memory matrix transpose algorithms, however this time without an assumption of obliviousness. Let the input to the matrix transpose problem be an $n \times n$ matrix $A$, with $w$-bit integer entries. The matrix $A$ is stored in row-major order, meaning that we have $n/B$ disk blocks for each input row of $A$. The first such disk block stores the first $B$ entries of the row and so on. The goal is to compute $A^T$, i.e. output $n/B$ blocks per column of $A$, containing the corresponding entries in order.

Let $\Alg$ be a randomized algorithm for transposing an $n \times n$ matrix in expected $O(t)$ I/Os, having error probability at most $1/3$. By aborting $\Alg$ when it spends more than $6t$ I/Os, we obtain an algorithm with worst case $t^* = O(t)$ I/Os and error probability $1/2$. We re-define $t$ to equal this new worst case number of I/O's (to avoid having to write $t^*$ or $6t$ in all places). As in our proof of the external memory sorting lower bound, we can again define the I/O-graph $G$ corresponding to an execution of $\Alg$ on a matrix $A$. This graph again has an input block node for each block in the rows of $A$ and an output block node for each block in the rows of $A^T$ (columns of $A$).

Consider the execution of $\Alg$ on a uniform random matrix $A$ (each entry is chosen independently as a $w$-bit integer). We now fix the random choices of this algorithm to obtain a deterministic algorithm $\Alg^*$ with the same error probability over the random input matrix $A$. As in our proof of the sorting lower bound, we now fix an I/O-graph which is the result of running $\Alg^*$ on many different matrices:

\begin{lemma}
\label{lem:manymatrices}
There exists a set $\Gamma$ containing at least
$$
\frac{2^{n^2 w}}{2(t + 2n^2/B + 1)^{t+1}}
$$
different input matrices $A$, such that $\Alg^*$ is correct on all matrices $A \in \Gamma$ and the I/O-graph is the same for all $A \in \Gamma$.
\end{lemma}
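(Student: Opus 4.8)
The plan is to mirror, almost verbatim, the two-step counting argument that established Lemma~\ref{lem:popular}: first bound the number of distinct I/O-graphs $\Alg^*$ can produce, then use averaging over the many correct inputs. The only genuinely new bookkeeping is counting block nodes in a transpose instance rather than a permutation instance.

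First I would prove the transpose analog of Lemma~\ref{lem:fewgraphs}. In the transpose setting the initial I/O-graph has one block node per disk block of the $n$ rows of $A$ (that is $n \cdot (n/B) = n^2/B$ nodes, using $B \mid n$) and one block node per disk block of the $n$ rows of $A^T$, i.e. the columns of $A$ (another $n^2/B$ nodes), plus a single memory node; so there are $2n^2/B$ block nodes to begin with. Running the deterministic $\Alg^*$, each I/O either reads one of the at most $t + 2n^2/B$ currently live block nodes, or is a write to a previously untouched block (creating a new node), for at most $t + 2n^2/B + 1$ distinct possible updates to $G$; together with the $\le t+1$ possible values for the total number of I/Os performed, this gives at most $(t+1)\,(t + 2n^2/B + 1)^{t} \le (t + 2n^2/B + 1)^{t+1}$ distinct I/O-graphs, exactly as in Lemma~\ref{lem:fewgraphs}.

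Next I would apply the pigeonhole/averaging step of Lemma~\ref{lem:popular}. There are exactly $2^{n^2 w}$ input matrices (each of the $n^2$ entries is an independent $w$-bit integer), and $\Alg^*$ has error probability at most $1/2$ over a uniform random such matrix, so it is correct on at least $2^{n^2 w}/2$ of them. Partitioning this set of correct matrices according to which of the at most $(t + 2n^2/B + 1)^{t+1}$ I/O-graphs arises, some graph $G$ is shared by at least $\bigl(2^{n^2 w}/2\bigr)/(t + 2n^2/B + 1)^{t+1}$ correct matrices; letting $\Gamma$ be that set of matrices gives the claim.

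I do not expect a real obstacle here — the argument is a direct transcription of Lemmas~\ref{lem:fewgraphs} and~\ref{lem:popular}. The one point needing care is the count of initial block nodes: both the row-major input and the column-major output contribute $n^2/B$ blocks, so the per-I/O branching factor becomes $t + 2n^2/B + 1$ rather than the $t + n(w+\lg n)/b + nw/b + 1$ of the sorting proof; everything downstream (the $(t+1)$ factor for the number of I/Os, and the division of the $2^{n^2 w}/2$ correct inputs among the graphs) is unchanged.
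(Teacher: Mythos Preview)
Your proposal is correct and matches the paper's own proof essentially line for line: bound the number of I/O-graphs by $(t+1)(t+2n^2/B+1)^t \le (t+2n^2/B+1)^{t+1}$ using the $2n^2/B$ initial block nodes, then pigeonhole the at least $2^{n^2 w}/2$ correct inputs over these graphs. There is nothing to add.
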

\begin{proof}
First we bound the number of different I/O graphs that may result from the execution of $\Alg^*$ (equivalent of Lemma~\ref{lem:popular}). There are $2n^2/B$ initial block nodes, hence each I/O accesses one of at most $t + 2n^2/B + 1$ nodes (the plus one to account for the creation of a new node). There are $t+1$ choices for the number of I/Os. Thus the number of distinct I/O-graphs that may result from the execution of $\Alg^*$ is no more than $(t+1)(t + 2n^2/B + 1)^{t} \leq (t + 2n^2/B + 1)^{t+1}$. The lemma follows by observing that $\Alg^*$ is correct on at least $2^{n^2 w}/2$ input matrices.
\end{proof}

We can again define the distance that an entry of an input matrix $A$ has to travel in an I/O-graph $G$. Formally, define $\dist(i,j,G)$ to be the distance from the input block node in $G$ storing the entry $(i,j)$, to the output block node in $G$ storing the entry $(i,j)$ after the transpose. We have the following equivalent of Lemma~\ref{lem:travel}:

\begin{lemma}
\label{lem:transposelong}
For any I/O-graph $G$, there are at least $(4/5)n^2$ entries $(i,j)$ for which $\dist(i,j,G) \geq (1/2)\lg_{2m/b} B$.
\end{lemma}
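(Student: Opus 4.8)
The plan is to adapt the counting argument from the proof of Lemma~\ref{lem:travel}, but with a crucial simplification: here the ``permutation'' is fixed (transpose sends entry $(i,j)$ to position $(j,i)$), so there is no averaging over permutations. Instead, all the work goes into bounding how many entries \emph{can} be close to their destinations in a single I/O-graph $G$, and showing this number is at most $(1/5)n^2$. First I would recall that every node in $G$ has degree at most $2m/b$: each block node touches at most one memory node when created plus one more if re-accessed (giving degree $\le 2$, actually), and each memory node participates in at most $m/b$ I/Os plus the two edges to neighboring memory nodes, so its degree is $O(m/b)$. Hence from any fixed node, at most $(2m/b)^{\ell}$ nodes are reachable within undirected distance $\ell$.

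The key step is to bound the number of \emph{close pairs}. Consider an input block node $v$; it stores $B$ consecutive entries of some row of $A$, i.e.\ entries $(i, j)$ for $j$ in a range of length $B$. If $\dist(i,j,G) < (1/2)\lg_{2m/b} B$ for some such $j$, then the output block node holding entry $(i,j)$ after transpose lies within undirected distance $(1/2)\lg_{2m/b} B$ of $v$, so it is one of at most $(2m/b)^{(1/2)\lg_{2m/b}B} = \sqrt{B}$ nodes. Each output block node holds $B$ entries. So the number of entries stored in $v$ whose destination is close to $v$ is at most $\sqrt{B}\cdot B$. Wait — that already exceeds $B$, so I need to be more careful: since $v$ only holds $B$ entries total, the count of close entries in $v$ is at most $B$ trivially. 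The real leverage comes from charging at the level of output blocks: for a close pair, the input block $v$ lies within distance $(1/2)\lg_{2m/b}B$ of the output block, so $v$ is one of at most $\sqrt{B}$ input block nodes adjacent (within that radius) to the output block. Summing over all $2n^2/(2B) = n^2/B$ output block nodes, the number of (input block, output block) pairs at distance $<(1/2)\lg_{2m/b}B$ is at most $(n^2/B)\cdot\sqrt{B}$, and each such pair accounts for at most $B$ entries (the entries the input block and output block might share), giving at most $(n^2/B)\cdot \sqrt{B}\cdot B = n^2\sqrt{B}$ close entries — which is too weak.

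To fix the bound, I would instead count more tightly: a close entry $(i,j)$ is determined by choosing its output block node (one of $n^2/B$ choices — or we can even fix it), then choosing the input block node among the $\le (2m/b)^{(1/2)\lg_{2m/b}B}=\sqrt{B}$ nodes within the radius, and then choosing which of the $B$ entries of that input block it is. But the output block node is itself redundant once we know the entry's column and which of the $n/B$ blocks along that column — so the honest count is: number of close entries $\le$ (number of output block nodes)$\cdot\sqrt{B}\cdot B$, versus total $n^2$ entries distributed over $n^2/B$ output block nodes each holding $B$. Dividing, the \emph{fraction} of entries that are close is at most $\sqrt{B}\cdot B \big/ B = \sqrt{B}$ per output block — still wrong. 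The correct accounting, mirroring Lemma~\ref{lem:travel} exactly, is: specify a close entry by its output block (determined by column $j'$ and block index, so essentially free given the entry is ``close''), then note its input block is one of $\sqrt{B}$ candidates, then one of $B$ positions in that input block; but symmetrically each input block holds only $B$ entries, so the number of close entries is at most $\min$ of these, and a clean double-count over the $n^2/B$ \emph{input} blocks gives: each input block has at most $\sqrt{B}$ output blocks within radius, contributing at most $\sqrt{B}$ close entries each if we argue an input block and an output block share at most one entry after transpose — indeed input block = fixed row $i$, column range; output block = fixed column $j'$, row range; they share the single entry $(i,j')$ iff $j'$ is in the range and $i$ in the range. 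So each (input block, output block) pair contributes at most $1$ close entry! Thus the number of close entries is at most (number of input blocks)$\cdot\sqrt{B} = (n^2/B)\cdot\sqrt{B} = n^2/\sqrt{B} \le (1/5)n^2$ once $B$ is a large enough constant (which we may assume, else $\lg_{2m/b}B = O(1)$ and the lemma is trivial). Therefore at least $(4/5)n^2$ entries satisfy $\dist(i,j,G)\ge (1/2)\lg_{2m/b}B$.

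The main obstacle, as the above shows, is getting the combinatorial charging exactly right: one must exploit that an input block (a horizontal segment) and an output block (a vertical segment) intersect in at most one matrix entry, which is what makes $n^2/\sqrt{B}$ rather than a useless bound come out. Once that observation is in place, the rest is the same degree-bound-plus-ball-size estimate used in Lemma~\ref{lem:travel}, with no permutation averaging needed since the destination map is fixed.
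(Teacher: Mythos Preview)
Your proposal, after the false starts, lands on exactly the argument the paper gives: every node has degree at most $2m/b$, so from a fixed input block node at most $(2m/b)^{(1/2)\lg_{2m/b}B}=\sqrt{B}$ output block nodes lie within the radius; since the $B$ entries of an input block all sit in the same row of $A$ and hence in distinct columns, they go to $B$ \emph{distinct} output blocks (equivalently, your observation that an input block and an output block share at most one entry), so at most $\sqrt{B}$ entries per input block are close, giving at most $(n^2/B)\sqrt{B}=n^2/\sqrt{B}\le(1/5)n^2$ close entries for $B$ a large enough constant. The paper's proof is just the clean one-paragraph version of your final count, phrased via ``distinct destinations'' rather than ``at most one shared entry per block pair'', but the content is identical.
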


\begin{proof}
The degree of nodes in $G$ is at most $2m/b$. Thus for any input block node, there can be at most $(2m/b)^d$ output block nodes in $G$ within distance $d$. But all entries of an input block node have distinct destination output block nodes. This is true since all entries in an input block node resides in the same row of $A$ and hence reside in distinct columns. Thus there can be at most $\sqrt{B}$ indices $j$ among the $B$ indices in an input block for which $\dist(i,j,G) \leq (1/2)\lg_{2m/b} B$. The lemma follows immediately.
\end{proof}

\paragraph{Reduction to Network Coding.}
We are ready to make the reduction to network coding. We basically re-execute the reduction we did for sorting. Let $\Gamma$ be the set of input matrices promised by Lemma~\ref{lem:manymatrices} and let $G$ be the corresponding I/O-graph.

We create a new graph $G^*$ from $G$ by adding all nodes and edges from $G$ to $G^*$. The edges between block nodes and memory nodes have capacity $b$ bits. The edges between memory nodes have capacity $m$ bits. We also create $n^2$ sources $s_{1,1},\dots,s_{i,j},\dots,s_{n,n}$ and $n^2$ sinks $t_{1,1},\dots,t_{i,j},\dots,t_{n,n}$. We create the coordinator node $u$ and extra nodes $p_{1,1},\dots,p_{n,n}$. We add a directed edge from the $s_{i,j}$'s to $u$ with capacity $w$. We add a directed edge from $s_{i,j}$ to $p_{i,j}$ also with capacity $w$. We also add a directed edge from each $p_{i,j}$ to the input block node representing entry $(i,j)$, having capacity $w$ bits. Similarly, we add a directed edge from the the output block node representing entry $(i,j)$ to sink $t_{i,j}$ for every $i$. These edges also have capacity $w$ bits. Finally, we add edges with capacity $\rho_{i,j}$ from $u$ to both $t_{i,j}$ and $p_{i,j}$ where the parameters $\rho_{i,j} > 0$ will be fixed later.

As in the proof for sorting, we obtain a network coding solution for $G^*$ as follows: Each source $s_{i,j}$ thinks of its $w$-bit input $X_{i,j}$ as the entry $(i,j)$ of an input matrix to the matrix transpose problem. The sources start by transmitting their input to the special coordinator node $u$. The node $u$ invokes Lemma~\ref{lem:coordinate} to obtain a (prefix-free) message $R_{i,j}$ for each $(i,j)$. The coordinator sends this message $R_{i,j}$ to $p_{i,j}$ and $t_{i,j}$. The sources $s_{i,j}$ also forward their input to the nodes $p_{i,j}$. From Lemma~\ref{lem:coordinate}, the nodes $p_{i,j}$ can now compute a vector $v_{i,j}$ such that, if each $p_{i,j}$ replaces $X_{i,j}$ by $X_{i,j} \oplus v_{i,j}$, then the resulting inputs correspond to a matrix $A \in \Gamma$. The nodes $p_{i,j}$ thus compute $A_{i,j}=X_{i,j} \oplus v_{i,j}$ and sends it to the corresponding input block node. We therefore fix $\rho_i = \E[|R_{i,j}|]$. The input block nodes now knows the contents of the input blocks when the input matrix is $A$. Thus the network can simulate the entire algorithm $\Alg^*$ (see the proof for sorting), which results in the output block nodes knowing the values $A_{i,j}$ ($\Alg^*$ is correct on all matrices in $\Gamma$ and uses the same fixed I/O-graph $G$). The output block nodes forward $A_{i,j}$ to $t_{i,j}$. The sink nodes $t_{i,j}$ compute $v_{i,j}$ from the message from $u$ and replaces $A_{i,j}$ with $X_{i,j} = A_{i,j} \oplus v_{i,j}$. This completes the simulation. We refer the reader to the proof of the sorting lower bound for a more detailed description.

\paragraph{Deriving the Lower Bound.}
The network coding solution we obtained for $G^*$ achieves a network coding rate of $w$ bits as all capacity contraints are respected (again, see the proof for sorting for more details). From Conjecture~\ref{conj:main}, it follows that the multicommodity flow rate has to be at least $w$ for the undirected version of $G^*$.

We can again examine how much flow that can be transmitted from the sources to the sinks via a path that uses the node $u$. Since any flow using $u$ as an intermediate node must traverse at least two edges incident to $u$, we conclude that the total amount of flow that can use the node $u$ as an intermediate node on the path to a sink is at most $n^2w/2 + \sum_i \rho_i$. From Lemma~\ref{lem:coordinate}, we get that if $|\Gamma| \geq 2^{n^2 w - o(n^2w)}$, then $\sum_i \rho_i = o(n^2 w)$. Combining this with Lemma~\ref{lem:transposelong}, we get that there is at least $(4/5)n^2w - n^2w/2 - o(n^2w) = \Omega(n^2w)$ flow that has to traverse a path of length $\Omega(\lg_{2m/b} B)$ in $G^*$ (the flow must use a path in the undirected version of $G$). Thus the sum of capacities in $G^*$ must be $\Omega(n^2 w \lg_{2m/b} B)$ if $|\Gamma| \geq 2^{n^2 w - o(n^2w)}$. Every I/O made by $\Alg^*$ adds $O(b)$ bits of capacity to $G^*$, thus we get a lower bound of $t = \Omega(n^2w/b \cdot \lg_{2m/b} B) = \Omega(n^2/B \cdot \lg_{2M/B} B)$ I/Os, provided that $|\Gamma| \geq 2^{n^2 w - o(n^2w)}$. From Lemma~\ref{lem:manymatrices}, we know that $|\Gamma| \geq 2^{n^2 w}/2(t + 2n^2/B+1)^{t+1}$ hence we conclude that either $t = \Omega(n^2/B \cdot  \lg_{2M/B} B)$ or $t \lg (t + 2n^2/B + 1) = \Omega(n^2 w) \Rightarrow t = \Omega(n^2 w/ \lg(n^2 w))$. This concludes the proof of Theorem~\ref{thm:transposeIO}.

\section{Acknowledgement}
This work was done while MohammadTaghi Hajiaghayi and Kasper Green Larsen were long term visitors at the Simons Institute for Theory of Computing at UC Berkeley.
\bibliographystyle{apalike}
\bibliography{refs}

\begin{thebibliography}{}

\bibitem[Adler et~al., 2006]{Adler:soda}
Adler, M., Harvey, N. J.~A., Jain, K., Kleinberg, R., and Lehman, A.~R. (2006).
\newblock On the capacity of information networks.
\newblock In {\em Proceedings of the Seventeenth Annual ACM-SIAM Symposium on
  Discrete Algorithm}, SODA '06, pages 241--250.

\bibitem[Aggarwal and Vitter, 1988]{aggarwal1988input}
Aggarwal, A. and Vitter, J. (1988).
\newblock The input/output complexity of sorting and related problems.
\newblock {\em Communications of the ACM}, 31(9):1116--1127.

\bibitem[Ahlswede et~al., 2000]{ahlswede2000network}
Ahlswede, R., Cai, N., Li, S.-Y., and Yeung, R.~W. (2000).
\newblock Network information flow.
\newblock {\em IEEE Transactions on information theory}, 46(4):1204--1216.

\bibitem[Barak et~al., 2010]{Barak:compress}
Barak, B., Braverman, M., Chen, X., and Rao, A. (2010).
\newblock How to compress interactive communication.
\newblock In {\em Proceedings of the Forty-second ACM Symposium on Theory of
  Computing}, STOC '10, pages 67--76.

\bibitem[Braverman et~al., 2017]{braverman2016network}
Braverman, M., Garg, S., and Schvartzman, A. (2017).
\newblock Coding in undirected graphs is either very helpful or not helpful at
  all.
\newblock In {\em 8th Innovations in Theoretical Computer Science Conference,
  {ITCS} 2017, January 9-11, 2017, Berkeley, CA, {USA}}, pages 18:1--18:18.

\bibitem[Eenberg et~al., 2017]{eenberg:pq}
Eenberg, K., Larsen, K.~G., and Yu, H. (2017).
\newblock Decreasekeys are expensive for external memory priority queues.
\newblock In {\em Proceedings of the 49th Annual {ACM} {SIGACT} Symposium on
  Theory of Computing, {STOC} 2017}, pages 1081--1093.

\bibitem[Han, 2002]{MR2121186}
Han, Y. (2002).
\newblock Deterministic sorting in {$O(n\log\log n)$} time and linear space.
\newblock In {\em Proceedings of the {T}hirty-{F}ourth {A}nnual {ACM}
  {S}ymposium on {T}heory of {C}omputing}, pages 602--608. ACM, New York.

\bibitem[Han and Thorup, 2002]{han2002integer}
Han, Y. and Thorup, M. (2002).
\newblock Integer sorting in {$O(n\sqrt{\log\log n})$} expected time and linear
  space.
\newblock In {\em Foundations of Computer Science, 2002. Proceedings. The 43rd
  Annual IEEE Symposium on}, pages 135--144. IEEE.

\bibitem[Harvey et~al., 2006]{harvey2006capacity}
Harvey, N.~J., Kleinberg, R., and Lehman, A.~R. (2006).
\newblock On the capacity of information networks.
\newblock {\em IEEE/ACM Transactions on Networking (TON)}, 14(SI):2345--2364.

\bibitem[Harvey et~al., 2004]{harvey2004comparing}
Harvey, N.~J., Kleinberg, R.~D., and Lehman, A.~R. (2004).
\newblock Comparing network coding with multicommodity flow for the k-pairs
  communication problem.

\bibitem[Iacono and P{\u{a}}tra{\c{s}}cu, 2012]{iacono2012using}
Iacono, J. and P{\u{a}}tra{\c{s}}cu, M. (2012).
\newblock Using hashing to solve the dictionary problem (in external memory).
\newblock In {\em Proceedings of the twenty-third annual ACM-SIAM symposium on
  Discrete algorithms}, pages 570--582. SIAM.

\bibitem[Jain et~al., 2006]{jain2006capacity}
Jain, K., Vazirani, V.~V., and Yuval, G. (2006).
\newblock On the capacity of multiple unicast sessions in undirected graphs.
\newblock {\em IEEE/ACM Transactions on Networking (TON)}, 14(SI):2805--2809.

\bibitem[Jiang and Larsen, 2019]{larsen:pqupper}
Jiang, S. and Larsen, K.~G. (2019).
\newblock A faster external memory priority queue with decreasekeys.
\newblock In {\em Proceedings of the 30th Annual {ACM-SIAM} Symposium on
  Discrete Algorithms, {SODA} 2019}.
\newblock To appear.

\bibitem[Koetter and M{\'e}dard, 2003]{koetter2003algebraic}
Koetter, R. and M{\'e}dard, M. (2003).
\newblock An algebraic approach to network coding.
\newblock {\em IEEE/ACM Transactions on Networking (TON)}, 11(5):782--795.

\bibitem[Li and Li, 2004]{lili}
Li, Z. and Li, B. (2004).
\newblock Network coding: the case of multiple unicast sessions.
\newblock In {\em Proceedings of the 42nd Allerton Annual Conference on
  Communication, Control and Computing}, Allerton'04.

\bibitem[Li et~al., 2005]{li2005achieving}
Li, Z., Li, B., Jiang, D., and Lau, L.~C. (2005).
\newblock On achieving optimal throughput with network coding.
\newblock In {\em INFOCOM 2005. 24th Annual Joint Conference of the IEEE
  Computer and Communications Societies. Proceedings IEEE}, volume~3, pages
  2184--2194. IEEE.

\bibitem[Verbin and Zhang, 2013]{verbin}
Verbin, E. and Zhang, Q. (2013).
\newblock The limits of buffering: {A} tight lower bound for dynamic membership
  in the external memory model.
\newblock {\em {SIAM} J. Comput.}, 42(1):212--229.

\bibitem[Yao, 1981]{MR624736}
Yao, A. C.~C. (1981).
\newblock Should tables be sorted?
\newblock {\em J. Assoc. Comput. Mach.}, 28(3):615--628.

\bibitem[Yi and Zhang, 2010]{zhang}
Yi, K. and Zhang, Q. (2010).
\newblock On the cell probe complexity of dynamic membership.
\newblock In {\em Proceedings of the Twenty-First Annual {ACM-SIAM} Symposium
  on Discrete Algorithms, {SODA} 2010, Austin, Texas, USA, January 17-19,
  2010}, pages 123--133.

\bibitem[Yin et~al., 2018]{reductionapproach}
Yin, X., Li, Z., Liu, Y., and Wang, X. (2018).
\newblock A reduction approach to the multiple-unicast conjecture in network
  coding.
\newblock {\em {IEEE} Trans. Information Theory}, 64(6):4530--4539.

\end{thebibliography}

\end{document}